\def\H{\mathcal{H}}
\def\M{\mathcal{M}}
\def\S{\mathfrak{S}}
\def\T{\mathfrak{T}}
\def\B{\mathfrak{B}}
\newcommand{\rank}{\mathrm{rank}}
\newcommand{\Tr}{\mathrm{Tr}}
\newcounter{defin}  \newcounter{lemma}  \newcounter{theorem}
\newcounter{property} \newcounter{corol}  \newcounter{remark} \newcounter{example}
\newenvironment{lemma}{\par\refstepcounter{lemma}
     \textbf{Lemma \thelemma.} }{\rm\par}
\newenvironment{theorem}{\par\refstepcounter{theorem}
     \textbf{Theorem \thetheorem.}\ }{\rm\par}
\newenvironment{property}{\par\refstepcounter{property}
     \textbf{Proposition \theproperty.}\ }{\rm\par}
\newenvironment{corollary}{\par\refstepcounter{corol}
     \textbf{Corollary \thecorol.} }{\rm\par}
\newenvironment{definition}{\par\refstepcounter{defin}
     \textbf{Definition \thedefin.}\ }{\rm\par}
\newenvironment{remark}{\par\refstepcounter{remark}
     \textbf{Remark \theremark.}}{\rm\par}
\begin{document}
\title{On superactivation of one-shot zero-error quantum capacity and the related property of quantum measurements}
\author{M.E. Shirokov\footnote{Steklov Mathematical Institute, RAS, Moscow, email:msh@mi.ras.ru}, T.V.
Shulman\footnote{University of Copenhagen, Denmark, email:shulman@math.ku.dk}}
\date{}
\maketitle

\begin{abstract} We begin with a detailed description of a low dimensional quantum channel
($d_A=4, d_E=3$) demonstrating the symmetric form of superactivation
of one-shot zero-error quantum capacity. This means appearance of a
noiseless (perfectly reversible) subchannel  in the tensor square of
a channel having no noiseless subchannels.

Then we describe a quantum channel $\Phi$ such that
$\,\bar{Q}_0(\Phi)=0$ and $\,\bar{Q}_0(\Phi\otimes\Phi)\geq\log n\,$
for any $\,n\leq+\infty$.

We also show that the superactivation of one-shot zero-error quantum
capacity of a channel can be reformulated in terms of quantum
measurements theory as  appearance of an indistinguishable subspace
for tensor product of two observables  having no undistinguishable
subspaces.
\end{abstract}
\maketitle


\section{Introduction}

The phenomenon of superactivation of quantum channel capacities has
been intensively studied since 2008 when G.Smith and J.Yard
established this property for the case of quantum capacity
\cite{S&Y}.

This phenomenon means that the particular capacity of the tensor
product of two quantum channels may be positive despite the same
capacity of each of these channels is zero. During the last five years
it was shown that superactivation holds for different quantum
channel capacities, in particular, for (one-shot and asymptotic)
zero-error classical and quantum capacities \cite{CCH,C&S,Duan}.

In this paper we focus attention on the superactivation of one-shot
zero-error quantum capacity which means that
\begin{equation}\label{sa-qc}
    \bar{Q}_0(\Phi_1)=\bar{Q}_0(\Phi_2)=0,\quad\textrm{but}\quad
    \bar{Q}_0(\Phi_1\otimes\Phi_2)>0
\end{equation}
for some channels $\Phi_1$ and $\Phi_2$, where $\bar{Q}_0$ denotes
the one-shot zero-error quantum capacity (described in Section 2).

This effect can be reformulated with no use the term "capacity" as
appearance of a noiseless (i.e. perfectly reversible) subchannel in
the tensor product of two channels each of which has no noiseless
subchannels. This reformulation seems more adequate for specialists
in functional analysis and operator algebras theory.

The existence of quantum channels, for which (\ref{sa-qc}) holds,
follows from the existence of quantum channels demonstrating so
called extreme superactivation of asymptotic zero-error capacities
shown in \cite{C&S} by rather inexplicit way in sufficiently high
dimensions. So, this result neither gives an explicit form of
channels demonstrating the superactivation of one-shot zero-error
quantum capacity, nor says anything about their minimal dimensions.

In our recent paper \cite{Sh&Sh} we explicitly describe low
dimensional channels $\Phi_1\neq\Phi_2$ ($\dim\H_A=8,\dim\H_E=5$)
demonstrating the extreme superactivation of one-shot zero-error
capacity which means (\ref{sa-qc}) with the condition
$\bar{Q}_0(\Phi_1)=\bar{Q}_0(\Phi_2)=0$ replaced by the stronger
condition $\bar{C}_0(\Phi_1)=\bar{C}_0(\Phi_2)=0$ (where $\bar{C}_0$
is the one-shot zero-error classical capacity). For these channels
superactivation (\ref{sa-qc}) obviously holds.

In this paper we use the same approach to construct \emph{more
simple} example of superactivation (\ref{sa-qc}). It turns out that
the change
$$
\bar{C}_0(\Phi_1)=\bar{C}_0(\Phi_2)=0\quad\rightarrow\quad\bar{Q}_0(\Phi_1)=\bar{Q}_0(\Phi_2)=0
$$
of prerequisites makes it possible to essentially decrease
dimensions ($\dim\H_A=4,\dim\H_E=3$) and to construct a symmetrical
example $\Phi_1=\Phi_2$, i.e. a such channel $\Phi$ that
\begin{equation*}
    \bar{Q}_0(\Phi)=0,\quad\textrm{but}\quad
    \bar{Q}_0(\Phi\otimes\Phi)>0.
\end{equation*}
Moreover, this channel $\Phi$ is defined via so simple
noncommutative graph, which gives possibility to write a minimal
Kraus  representation of $\Phi$ in explicit (numerical) form.

Then we describe a quantum channel $\Phi$ such that
\begin{equation*}
    \bar{Q}_0(\Phi)=0,\quad\textrm{but}\quad
    \bar{Q}_0(\Phi\otimes\Phi)\geq\log n,
\end{equation*}
where $\,n\,$ is any natural number or $\,+\infty\,$ (in the last
case $\Phi$ is an infinite-dimensional channel:
$\,\dim\H_A=\dim\H_B=+\infty$).\smallskip

In the last part of the paper (Section 3) we show  that the
superactivation of one-shot zero-error quantum capacity
(\ref{sa-qc}) has a counterpart in the theory of quantum
measurements. Namely, it can be reformulated as appearance of an
indistinguishable subspace for the tensor product of two quantum
observables having no indistinguishable subspaces. This observation
is quite simple but seems interesting for specialists in quantum
measurements theory.\smallskip

A general way to write the Kraus representation of a channel with
given noncommutative graph is considered in the Appendix.

\section{Superactivation of one-shot
zero-error quantum capacity}

Let $\H$ be a separable Hilbert space, $\B(\H)$ and
$\mathfrak{T}(\mathcal{H})$ -- the Banach spaces of all bounded
operators in $\mathcal{H}$ and of all trace-class operators in $\H$
correspondingly, $\S(\H)$ -- the closed convex subset of
$\mathfrak{T}( \H)$ consisting of positive operators with unit trace
called \emph{states} \cite{H-SCI,N&Ch}. If $\dim\H=n<+\infty$ we may
identify $\B(\H)$ and $\T(\H)$ with the space $\mathfrak{M}_n$ of
all $n\times n$ matrices (equipped with the appropriate norm).
\smallskip

Let
$\Phi:\mathfrak{T}(\mathcal{H}_A)\rightarrow\mathfrak{T}(\mathcal{H}_B)$
be a quantum channel, i.e. a  completely positive trace-preserving
linear map \cite{H-SCI,N&Ch}. Stinespring's theorem implies the
existence of a Hilbert space $\mathcal{H}_E$ and of an isometry
$V:\mathcal{H}_A\rightarrow\mathcal{H}_B\otimes\mathcal{H}_E$ such
that
\begin{equation}\label{Stinespring-rep}
\Phi(\rho)=\mathrm{Tr}_{\mathcal{H}_E}V\rho V^{*},\quad
\rho\in\mathfrak{T}(\mathcal{H}_A).
\end{equation}
The quantum  channel
\begin{equation}\label{c-channel}
\mathfrak{T}(\mathcal{H}_A)\ni
\rho\mapsto\widehat{\Phi}(\rho)=\mathrm{Tr}_{\mathcal{H}_B}V\rho
V^{*}\in\mathfrak{T}(\mathcal{H}_E)
\end{equation}
is called \emph{complementary} to the channel $\Phi$
\cite{H-SCI,H-c-c}. The complementary channel is defined uniquely up
to isometrical equivalence \cite[the Appendix]{H-c-c}.

The one-shot zero-error quantum capacity $\bar{Q}_0(\Phi)$ of a
channel $\Phi$ can be defined as $\;\sup_{\H\in
q_0(\Phi)}\log\dim\H\,$, where $q_0(\Phi)$ is the set of all
subspaces $\H_0$ of $\H_A$ on which the channel $\Phi$ is perfectly
reversible (in the sense that there is a channel $\Theta$ such that
$\Theta(\Phi(\rho))=\rho$ for all states $\rho$ supported by
$\H_0$). The (asymptotic) zero-error quantum capacity is defined by
regularization: $Q_0(\Phi)=\sup_n n^{-1}\bar{Q}_0(\Phi^{\otimes n})$
\cite{ZEC, CCH, C&S, Duan, W&Co}.

It is well known that a channel $\Phi$ is perfectly reversible on a
subspace $\H_0$ if and only if the restriction of the complementary
channel $\widehat{\Phi}$ to the subset $\S(\H_0)$ is completely
depolarizing , i.e. $\widehat{\Phi}(\rho_1)=\widehat{\Phi}(\rho_2)$
for all states $\rho_1$ and $\rho_2$ supported by $\H_0$
\cite[Ch.10]{H-SCI}. It follows that the one-shot zero-error quantum
capacity $\bar{Q}_0(\Phi)$ of a channel $\Phi$ is completely
determined by the set
$\mathcal{G}(\Phi)\doteq\widehat{\Phi}^*(\B(\H_E))$ called the
\emph{noncommutative graph} of $\Phi$ \cite{W&Co}. \smallskip

\begin{lemma}\label{trans-l+}
\emph{A channel $\,\Phi:\S(\H_A)\rightarrow\S(\H_B)$ is perfectly
reversible on the subspace $\,\H_0\subseteq\H_A$ spanned by the
family $\,\{\varphi_i\}_{i=1}^{n}$, $n\leq+\infty$, of orthogonal
unit vectors (which means that $\,\bar{Q}_0(\Phi)\geq\log n$) if and
only if
\begin{equation}\label{operators+}
\langle \varphi_i|A\varphi_j\rangle=0\quad\textit{and}\quad \langle
\varphi_i|A\varphi_i\rangle=\langle
\varphi_j|A\varphi_j\rangle\quad\forall i,j \;\;\forall
A\in\mathfrak{L},
\end{equation}
where $\mathfrak{L}=\mathcal{G}(\Phi)$ or, equivalently,
$\mathfrak{L}$ is any subset of $\,\B(\H_A)$ such that}
$$
\textit{the weak operator closure of}\;\;
\mathrm{lin}\mathfrak{L}\;=\;\textit{the weak operator closure
of}\;\;\mathcal{G}(\Phi).
$$
\end{lemma}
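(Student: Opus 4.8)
The plan is to reduce everything to the criterion recalled just before the lemma: $\Phi$ is perfectly reversible on $\H_0$ if and only if its complementary channel $\widehat{\Phi}$ is completely depolarizing on $\S(\H_0)$, i.e. $\widehat{\Phi}(\rho_1)=\widehat{\Phi}(\rho_2)$ for all states $\rho_1,\rho_2$ supported by $\H_0$. First I would dualize this condition. Writing the Heisenberg adjoint as $\widehat{\Phi}^*(B)=V^*(I_{\H_B}\otimes B)V$, the equality $\widehat{\Phi}(\rho_1)=\widehat{\Phi}(\rho_2)$ tested against all $B\in\B(\H_E)$ becomes, after moving $B$ across the trace,
$$\Tr(\rho_1 A)=\Tr(\rho_2 A),\qquad A\in\mathcal{G}(\Phi)=\widehat{\Phi}^*(\B(\H_E)).$$
Thus complete depolarization says precisely that every functional $\rho\mapsto\Tr(\rho A)$, $A\in\mathcal{G}(\Phi)$, is constant on $\S(\H_0)$.

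Second I would convert this constancy into conditions on matrix elements. Denoting by $P_0$ the projection onto $\H_0$ and using $\rho=P_0\rho P_0$, one has $\Tr(\rho A)=\Tr(\rho\,P_0AP_0)$, so constancy on $\S(\H_0)$ is equivalent to $P_0AP_0=c_AP_0$ for some scalar $c_A$. Indeed $\langle\psi|(P_0AP_0-c_AP_0)\psi\rangle=0$ for every unit $\psi\in\H_0$ forces $P_0AP_0-c_AP_0=0$ on $\H_0$ by polarization over the complex field, with no self-adjointness needed. Reading $P_0AP_0=c_AP_0$ off in the orthonormal basis $\{\varphi_i\}$ gives exactly $\langle\varphi_i|A\varphi_j\rangle=0$ for $i\neq j$ and $\langle\varphi_i|A\varphi_i\rangle=c_A$ independent of $i$, which is (\ref{operators+}) for $\mathfrak{L}=\mathcal{G}(\Phi)$; the converse reading is immediate. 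This settles the first form of the criterion.

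For the \emph{or equivalently} clause I would argue that the set
$$\mathfrak{M}=\{A\in\B(\H_A):\langle\varphi_i|A\varphi_j\rangle=0,\ \langle\varphi_i|A\varphi_i\rangle=\langle\varphi_j|A\varphi_j\rangle\ \forall i,j\}$$
of operators satisfying (\ref{operators+}) is a \emph{weak-operator closed linear subspace} of $\B(\H_A)$, since each defining relation is the vanishing of a weak-operator continuous linear functional of $A$. Hence $\mathfrak{L}\subseteq\mathfrak{M}$ holds if and only if $\mathrm{lin}\,\mathfrak{L}\subseteq\mathfrak{M}$, if and only if the weak operator closure of $\mathrm{lin}\,\mathfrak{L}$ is contained in $\mathfrak{M}$. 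Therefore the validity of (\ref{operators+}) for all $A\in\mathfrak{L}$ depends only on the weak operator closure of $\mathrm{lin}\,\mathfrak{L}$; applying this with $\mathfrak{L}=\mathcal{G}(\Phi)$ and with any $\mathfrak{L}$ having the same closed linear span yields the asserted equivalence.

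I expect the main obstacle to be the infinite-dimensional case $n=+\infty$: one must check that the dualization, the constant-functional argument, and the weak-operator closedness all survive when $\H_0$ is infinite-dimensional, and that the states of $\S(\H_0)$ and their differences genuinely probe every matrix element $\langle\varphi_i|A\varphi_j\rangle$. The weak-operator continuity of the maps $A\mapsto\langle\varphi_i|A\varphi_j\rangle$ is exactly what keeps the closure argument robust in this regime.
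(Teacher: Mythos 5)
Your proposal is correct and follows the same route as the paper: the paper's proof consists of the single observation that relations (\ref{operators+}) mean exactly that the complementary channel $\widehat{\Phi}$ is completely depolarizing on $\S(\H_0)$, which is the known criterion for perfect reversibility. You have merely filled in the details the paper leaves implicit --- the dualization $\Tr(\widehat{\Phi}(\rho)B)=\Tr(\rho\,\widehat{\Phi}^*(B))$, the polarization argument giving $P_0AP_0=c_AP_0$, and the weak-operator closedness justifying the ``or equivalently'' clause --- all of which are sound, including in the case $n=+\infty$.
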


\begin{proof}
Relations (\ref{operators+}) mean that the complementary channel
$\widehat{\Phi}$ has completely depolarizing restriction to the
subset $\S(\H_0)$.
\end{proof}

\begin{remark}\label{trans-l+r}
Since a subspace $\mathfrak{L}$ of the algebra $\mathfrak{M}_n$ of
$n\times n$-matrices is a noncommutative graph of a particular
channel if and only if
\begin{equation}\label{L-cond}
\mathfrak{L}\;\,\textup{is
symmetric}\;\,(\mathfrak{L}=\mathfrak{L}^*)\;\,\textup{and contains
the unit matrix}
\end{equation}
(see Lemma 2 in \cite{Duan} or Proposition 2 in \cite{Sh&Sh}), Lemma
\ref{trans-l+} shows that one can "construct" a channel $\Phi$ with
$\dim\H_A=n$ having positive (correspondingly, zero) one-shot
zero-error quantum capacity by taking a subspace
$\mathfrak{L}\subset\mathfrak{M}_n$ satisfying (\ref{L-cond}) for
which the following condition is valid (correspondingly, not valid)
\begin{equation}\label{l-3-c}
\exists\varphi,\psi\in[\mathbb{C}^n]_1 \;\;\textup{s.t.}\;\; \langle
\psi|A\varphi\rangle=0\;\;\textup{and}\;\; \langle
\varphi|A\varphi\rangle=\langle \psi|A\psi\rangle\quad\forall
A\in\mathfrak{L},
\end{equation}
where $[\mathbb{C}^n]_1$ is the unit sphere of $\mathbb{C}^n$.

If $m$ is a natural number such that  $\dim\mathfrak{L}\leq m^2$,
then Corollary 1 in \cite{Sh&Sh} and Proposition \ref{cmp-c} in the
Appendix give explicit expressions of a channel $\Phi$ such that
$\mathcal{G}(\Phi)=\mathfrak{L}$ and $\dim\H_E\leq m$. $\square$
\end{remark}\medskip

Superactivation of one-shot zero-error quantum capacity means that
\begin{equation}\label{sa-cq+}
    \bar{Q}_0(\Phi_1)=\bar{Q}_0(\Phi_2)=0,\quad\textrm{but}\quad
    \bar{Q}_0(\Phi_1\otimes\Phi_2)>0.
\end{equation}
for some channels $\Phi_1$ and $\Phi_2$. As mentioned in the
Introduction the existence of channels $\Phi_1$ and $\Phi_2$ for
which (\ref{sa-cq+}) holds follows from the results in \cite{C&S},
but explicit examples of such channels with minimal dimensions are
not known (as far as we know).

Below we will construct a channel $\Phi$ with $\dim\H_A=4$,
$\dim\H_E=3$, $\dim\H_B=12$ such that (\ref{sa-cq+}) holds with
$\Phi_1=\Phi_2=\Phi$.\medskip

By Remark \ref{trans-l+r} the problem of finding channels, for which
(\ref{sa-cq+}) holds, is reduced to the problem of finding subspaces
$\mathfrak{L}_1$ and $\mathfrak{L}_2$ satisfying (\ref{L-cond}) such
that  condition (\ref{l-3-c}) is not valid  for
$\mathfrak{L}=\mathfrak{L}_1$ and for $\mathfrak{L}=\mathfrak{L}_2$
but it is valid for
$\mathfrak{L}=\mathfrak{L}_1\otimes\mathfrak{L}_2$. Now we will
consider a symmetrical example ($\mathfrak{L}_1=\mathfrak{L}_2$) of
such subspaces in $\mathfrak{M}_4$.

Let $U$ be the unitary operator in $\mathbb{C}^2$ determined (in the
canonical basis) by the matrix
$$
U =\left[\begin{array}{rr}
\eta &  0 \\
0  &  \bar{\eta}
\end{array}\right],
$$
where $\eta=\exp[\,\mathrm{i}\frac{\pi}{4}\,]$. Consider the
subspace
$$
\mathfrak{L}_0 = \left\{M=\left[\begin{array}{cc} A &
 \lambda U^*\\ \lambda U & A\end{array}\right],\;\; A\in \mathfrak{M}_2,\; \lambda\in\mathbb{C}\right\}
$$
of $\mathfrak{M}_4$. It obviously satisfies condition
(\ref{L-cond}).\medskip

\begin{theorem}\label{sqc} \emph{Condition
(\ref{l-3-c}) is not valid  for $\mathfrak{L}=\mathfrak{L}_0$ but it
is valid for  $\mathfrak{L}=\mathfrak{L}_0\otimes\mathfrak{L}_0$
with the vectors
\begin{equation}\label{vec+}
\!|\varphi_{t}\rangle=\textstyle\frac{1}{\sqrt{2}}\left[
|1\rangle\otimes|1\rangle+e^{\mathrm{i}t}|2\rangle\otimes|2\rangle\right]\!,\;|\psi_{t}\rangle=\textstyle\frac{1}{\sqrt{2}}\left[
|3\rangle\otimes|3\rangle+e^{\mathrm{i}t}|4\rangle\otimes|4\rangle\right]\!,\!\!
\end{equation}
where $\{|k\rangle\}_{k=1}^4$ is the canonical basis in
$\,\mathbb{C}^4$ and $\,t$ is a fixed number in $[0,2\pi)$.}
\end{theorem}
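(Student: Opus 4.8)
The plan is to reduce both halves of the statement to direct computations with the sesquilinear form $\langle\,\cdot\,|A\,\cdot\,\rangle$ evaluated on a spanning set of $\mathfrak{L}_0$, respectively $\mathfrak{L}_0\otimes\mathfrak{L}_0$. Both requirements in condition (\ref{l-3-c}) — the vanishing $\langle\psi|A\varphi\rangle=0$ and the equality $\langle\varphi|A\varphi\rangle=\langle\psi|A\psi\rangle$ — are linear in $A$, so it suffices to check or refute them on any set spanning $\mathfrak{L}$. For $\mathfrak{L}_0$ I would use the four ``diagonal'' matrices $\mathrm{diag}(A,A)$ with $A$ running over the matrix units $E_{kl}\in\mathfrak{M}_2$, together with the single ``off-diagonal'' generator $M_U=\left[\begin{smallmatrix}0&U^{*}\\ U&0\end{smallmatrix}\right]$; for $\mathfrak{L}_0\otimes\mathfrak{L}_0$ it suffices to use the products $M_1\otimes M_2$ with $M_i\in\mathfrak{L}_0$. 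The computational engine throughout is the action of a general $M=\left[\begin{smallmatrix}A&\lambda U^{*}\\ \lambda U&A\end{smallmatrix}\right]$ on the canonical basis, namely $M|1\rangle=a_{11}|1\rangle+a_{21}|2\rangle+\lambda\eta|3\rangle$, $M|2\rangle=a_{12}|1\rangle+a_{22}|2\rangle+\lambda\bar\eta|4\rangle$, and the mirror formulas for $|3\rangle,|4\rangle$.

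For the tensor-square assertion (the positive direction) I would substitute the stated $\varphi_t,\psi_t$ and keep only the relevant basis components. Since $\psi_t$ is supported on $|33\rangle,|44\rangle$, the overlap $\langle\psi_t|(M_1\otimes M_2)\varphi_t\rangle$ collapses to $\tfrac12\lambda_1\lambda_2(\eta^{2}+\bar\eta^{2})$, which vanishes precisely because $\eta^{2}+\bar\eta^{2}=2\cos(\pi/2)=0$ — this is the entire point of the choice $\eta=e^{\mathrm{i}\pi/4}$. The depolarizing equality follows from the observation that both sides equal $\tfrac12\big(a^{(1)}_{11}a^{(2)}_{11}+e^{\mathrm{i}t}a^{(1)}_{12}a^{(2)}_{12}+e^{-\mathrm{i}t}a^{(1)}_{21}a^{(2)}_{21}+a^{(1)}_{22}a^{(2)}_{22}\big)$, the coincidence being forced by the fact that the two diagonal blocks of every element of $\mathfrak{L}_0$ coincide. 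This direction is essentially bookkeeping.

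The negative direction — that (\ref{l-3-c}) fails for $\mathfrak{L}_0$ — is where the work lies. Writing $\varphi=\varphi'\oplus\varphi''$ and $\psi=\psi'\oplus\psi''$ in $\mathbb{C}^2\oplus\mathbb{C}^2$, I would translate ``$\langle\psi|M\varphi\rangle=0$ for all $M$'' into (i) the operator identity $\varphi'\psi'^{*}+\varphi''\psi''^{*}=0$ in $\mathfrak{M}_2$ (obtained from the diagonal generators; here $\varphi'\psi'^{*}$ denotes $x\mapsto\langle\psi'|x\rangle\varphi'$) and (ii) the scalar relation $\langle\psi'|U^{*}\varphi''\rangle+\langle\psi''|U\varphi'\rangle=0$ (from $M_U$); and ``$\langle\varphi|M\varphi\rangle=\langle\psi|M\psi\rangle$'' into a diagonal part $\varphi'\varphi'^{*}+\varphi''\varphi''^{*}=\psi'\psi'^{*}+\psi''\psi''^{*}$ and an off-diagonal ($\lambda$) part involving $U$. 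If one half-vector vanishes, (i) forces a second to vanish, degenerating to $\varphi=0\oplus\varphi''$, $\psi=\psi'\oplus 0$ (or its mirror); then the diagonal part forces $\varphi''=e^{\mathrm{i}\delta}\psi'$, so (ii) becomes a nonzero multiple of $\langle\psi'|U^{*}\psi'\rangle=\overline{\langle\psi'|U\psi'\rangle}$, which cannot vanish because $\langle\psi'|U\psi'\rangle=|\psi'_1|^{2}\eta+|\psi'_2|^{2}\bar\eta$ has strictly positive real part. In the generic case all four half-vectors are nonzero, so the rank-one identity (i) forces $\varphi''=\alpha\varphi'$, $\psi''=\beta\psi'$ with $\alpha\bar\beta=-1$, and the diagonal part then forces $\psi'=\gamma\varphi'$.

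The decisive step — and the one I expect to be the main obstacle — is extracting a contradiction from the two remaining scalar relations after this proportionality, and carrying the proportionality constants consistently so the final congruences genuinely clash. Setting $u=\langle\varphi'|U\varphi'\rangle$, condition (ii) reduces to $u=\alpha^{2}\bar u$; since $\mathrm{Re}\,u=\|\varphi'\|^{2}\cos(\pi/4)>0$ we have $u\neq0$, so taking moduli gives $|\alpha|=1$, whence $\beta=-\alpha$ and, from the diagonal magnitude relation $1+|\alpha|^2=(1+|\beta|^2)|\gamma|^2$, also $|\gamma|=1$. Feeding these into the off-diagonal part of the depolarizing condition yields $\mathrm{Re}(\alpha\bar u)=0$. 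Writing $\alpha=e^{\mathrm{i}\theta}$ and $u=|u|e^{\mathrm{i}\omega}$, the relation $u=\alpha^{2}\bar u$ gives $\theta\equiv\omega\pmod\pi$, while $\mathrm{Re}(\alpha\bar u)=0$ gives $\theta-\omega\equiv\pi/2\pmod\pi$; these are incompatible. Hence no unit vectors $\varphi,\psi$ satisfy (\ref{l-3-c}) for $\mathfrak{L}_0$, which completes the argument.
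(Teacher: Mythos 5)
Your proposal is correct and follows essentially the same route as the paper's proof: the same block decomposition of $\mathbb{C}^4$ into $\mathbb{C}^2\oplus\mathbb{C}^2$, the same four bilinear relations, the same split into degenerate and generic cases resting on the key fact that $\langle y|Uy\rangle\neq0$ for $y\neq 0$, and the same direct computation for the tensor square. The only (minor, and if anything slightly cleaner) deviation is that you derive the proportionalities $\varphi''\parallel\varphi'$, $\psi''\parallel\psi'$ from the rank-one operator identity $\varphi'\psi'^{*}+\varphi''\psi''^{*}=0$ rather than from the $2$-transitivity of $\mathfrak{M}_2$ cited in the paper, and you parametrize everything relative to $\varphi'$ instead of $y_2$; your final phase incompatibility is the same contradiction the paper reaches via $\nu^2 z_0=\bar z_0$ and $\mathrm{Re}(\nu z_0)=0$.
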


\begin{proof} Throughout the proof we will identify $\mathbb{C}^4$ with
$\mathbb{C}^2\oplus\mathbb{C}^2$.\medskip

Assume there exist unit vectors $\varphi=[x_1, x_2]$ and $\psi=[y_1,
y_2]$, $x_i,y_i\in\mathbb{C}^2$ such that $\langle\psi | M
\varphi\rangle=0$ and $\langle\psi | M \psi\rangle=\langle\varphi |
M  \varphi\rangle$ for all
   $M\in\mathfrak{L}_0$. It follows that
\begin{equation}\label{e-1}
    \langle y_1|Ax_1\rangle+\langle y_2|Ax_2\rangle=0\quad
\forall
   A\in\mathfrak{M}_2,
\end{equation}
\begin{equation}\label{e-2}
    \langle y_1|U^*x_2\rangle+\langle y_2|Ux_1\rangle=0,
\end{equation}
\begin{equation}\label{e-3}
    \langle y_1|Ay_1\rangle+\langle y_2|Ay_2\rangle=\langle x_1|Ax_1\rangle+\langle x_2|Ax_2\rangle\quad
\forall
   A\in\mathfrak{M}_2
\end{equation}
and
\begin{equation}\label{e-4}
    \langle y_1|U^*y_2\rangle+\langle y_2|Uy_1\rangle=\langle x_1|U^*x_2\rangle+\langle x_2|Ux_1\rangle.
\end{equation}

If $x_1\nparallel x_2$ then, by 2-transitivity of $\mathfrak{M}_2$,
there is $A_0\in\mathfrak{M}_2$ such that $y_1=A_0x_1$ and
$y_2=A_0x_2$ \cite{DMR}. So, (\ref{e-1}) implies $\langle
y_1|y_1\rangle+\langle y_2|y_2\rangle=0$, i.e. $y_1=y_2=0$.
Similarly, if $y_1\nparallel y_2$ then (\ref{e-1}) implies
$x_1=x_2=0$.

Thus, we necessarily have $x_1\parallel x_2$ and $y_1\parallel y_2$.
Now we will obtain a contradiction to (\ref{e-1})-(\ref{e-4}) by
considering the following cases.

Case 1: $x_2=0,\, x_1\neq0$. In this case (\ref{e-1}) implies
$\langle y_1|Ax_1\rangle=0$ for all $A\in\mathfrak{M}_2$, which can
be valid only if $y_1=0$. Then (\ref{e-3}) implies $\langle
x_1|Ax_1\rangle=\langle y_2|Ay_2\rangle$ for all
$A\in\mathfrak{M}_2$, which can be valid only if $x_1\parallel y_2$.
By Lemma \ref{sl} below this and (\ref{e-2}) show that $y_2=0$. So,
we obtain $y_1=y_2=0$.

Case 2: $y_2=0,\, y_1\neq0$. Similar to Case 1 we obtain
$x_1=x_2=0$.

Case 3: $x_2\neq0,\, y_2\neq0$. In this case $x_1=\mu x_2,\;y_1=\nu
y_2$ and (\ref{e-3}) implies
$$
    (1+|\mu|^2)\langle x_2|Ax_2\rangle=(1+|\nu|^2)\langle y_2|Ay_2\rangle\quad
\forall
   A\in\mathfrak{M}_2,
$$
which can be valid  only if $\,x_2\parallel y_2$. Hence we have
$x_1=\alpha y_2$ and $x_2=\beta y_2$ (in addition to $y_1=\nu y_2$).
We may assume that $x_1\neq0$ and $y_1\neq0$, since otherwise
(\ref{e-1}) implies $\langle y_2|Ax_2\rangle=0$ for all
$A\in\mathfrak{M}_2$, which can be valid only if either $x_2=0$ or
$y_2=0$.

It follows from (\ref{e-1}) that $(\bar{\nu}\alpha+\beta)\langle
y_2|y_2\rangle=0$ and hence
\begin{equation}\label{beta}
\beta=-\bar{\nu}\alpha.
\end{equation}
By the below Lemma \ref{sl} $z_0=\langle y_2|Uy_2\rangle$ is a
nonzero complex number. So,  (\ref{e-4}) and (\ref{beta}) imply
$\,\mathrm{Re}(\nu z_0)=\mathrm{Re}(\alpha\bar{\beta}
z_0)=-|\alpha|^2\mathrm{Re}(\nu z_0)\,$ and hence
\begin{equation}\label{re-eq}
\mathrm{Re}(\nu z_0)=0.
\end{equation}
It follows from (\ref{e-2}) and (\ref{beta}) that
$$
\bar{\nu}\beta \bar{z}_0+\alpha
z_0=\alpha(-\bar{\nu}^2\bar{z}_0+z_0)=0.
$$
Since $\alpha\neq0$ ($x_1\neq0$) we have $\nu^2z_0=\bar{z}_0$. This
equality implies that $\,\nu z_0\,$ is a real number. So,
(\ref{re-eq}) shows that $\,\nu=0\,$ contradicting to
$\,y_1\neq0$.\smallskip

Thus, condition (\ref{l-3-c}) is not valid for
$\mathfrak{L}=\mathfrak{L}_0$.\smallskip

Now we will show that
\begin{equation}\label{one}
    \langle\psi_{t} | M_1\otimes M_2 \, \varphi_{t}\rangle=0\quad \forall
    M_1,M_2\in\mathfrak{L}_0,
\end{equation}
and
\begin{equation}\label{two}
    \langle\psi_{t} | M_1\otimes M_2 \,\psi_{t}\rangle=\langle\varphi_{t} | M_1\otimes M_2 \, \varphi_{t}\rangle\quad \forall
    M_1,M_2\in\mathfrak{L}_0,
\end{equation}
where $\varphi_{t}$ and $\psi_{t}$ are vectors defined in
(\ref{vec+}). Since we identify $\mathbb{C}^4$ with
$\mathbb{C}^2\oplus\mathbb{C}^2$, these vectors are represented as
follows
\begin{equation*}
\begin{array}{l}
|\varphi_{t}\rangle=\textstyle{\frac{1}{\sqrt{2}}}\left[\;|e_1,0\rangle\otimes
|e_1,0\rangle+e^{\mathrm{i}t}|e_2,0\rangle\otimes
|e_2,0\rangle\,\right]\;\\\\
|\psi_{t}\rangle=\textstyle{\frac{1}{\sqrt{2}}}\left[\;|0,e_1\rangle\otimes
|0,e_1\rangle+e^{\mathrm{i}t}|0,e_2\rangle\otimes
|0,e_2\rangle\,\right]\!,\vspace{5pt}
\end{array}
\end{equation*}
where $\{|e_i\rangle\}$ is the canonical basis in
$\mathbb{C}^2$.\smallskip

By setting $\alpha_1=1$ and $\alpha_2=e^{\mathrm{i}t}$ we have
\begin{equation}\label{int}
M_1\otimes M_2 |\varphi_{t}\rangle=\frac{1}{\sqrt{2}}\sum_{i=1}^2
\alpha_i |A_1 e_i, \lambda_1U e_i\rangle\otimes|A_2 e_i, \lambda_2U
e_i\rangle,
\end{equation}
and hence
$$
\begin{array}{c}
\displaystyle\langle\psi_{t}| M_1\otimes M_2
\,\varphi_{t}\rangle=\frac{1}{2}\sum_{i,j=1}^2
\bar{\alpha}_i\alpha_j \langle 0, e_i|\otimes \langle 0, e_i|\cdot|A_1 e_j, \lambda_1U e_j\rangle\otimes|A_2 e_j, \lambda_2U e_j\rangle\\
\displaystyle=\frac{1}{2}\lambda_1\lambda_2 \sum_{i,j=1}^2
\bar{\alpha}_i\alpha_j \langle e_i |Ue_j\rangle\langle e_i
|Ue_j\rangle=\frac{1}{2}\,\lambda_1\lambda_2\left[\,\eta^2|\alpha_1|^2+\bar{\eta}^2|\alpha_2|^2\,\right]=0,\quad
\end{array}
$$
Thus (\ref{one}) is valid. It follows from (\ref{int}) that
\begin{equation}\label{two+}
\begin{array}{l}
\!\!\!\!\!\displaystyle\langle\varphi_{t}| M_1\otimes M_2\,
\varphi_{t}\rangle \qquad\qquad\qquad\qquad\qquad\qquad\qquad\qquad\qquad\qquad\qquad\\
\qquad\qquad\displaystyle=\frac{1}{2}\sum_{i,j=1}^2
\bar{\alpha}_i\alpha_j \langle e_i,0|\otimes \langle e_i,0 |\cdot|A_1 e_j, \lambda_1Ue_j\rangle\otimes|A_2 e_j, \lambda_2Ue_j\rangle\\
\qquad\qquad\displaystyle=\frac{1}{2}\sum_{i,j=1}^2
\bar{\alpha}_i\alpha_j
 \langle e_i | A_1  e_j\rangle\langle e_i |A_2
e_j\rangle.
\end{array}
\end{equation}
Since
\begin{equation*}
M_1\otimes M_2|\psi_{t}\rangle=\frac{1}{\sqrt{2}}\sum_{i=1}^2
\alpha_i|\lambda_1 U^*e_i, A_1 e_i\rangle\otimes|\lambda_2 U^*e_i,
A_2e_i\rangle
\end{equation*}
we have
\begin{equation*}
\begin{array}{c}
\displaystyle\langle\psi_{t}| M_1\otimes M_2\,
\psi_{t}\rangle=\frac{1}{2}\sum_{i,j=1}^2
\bar{\alpha}_i\alpha_j\langle 0,e_i|\otimes \langle 0,
e_i|\cdot|\lambda_1 U^*e_j,
A_1e_j\rangle\otimes|\lambda_2 U^*e_j, A_2e_j\rangle\\
\displaystyle=\frac{1}{2}\sum_{i,j=1}^2
\bar{\alpha}_i\alpha_j\langle e_i | A_1  e_j\rangle\langle e_i |A_2
 e_j\rangle.\qquad\qquad\;\;\;
\end{array}
\end{equation*}
This equality and (\ref{two+}) imply (\ref{two}).
\end{proof}

\begin{lemma}\label{sl}
\emph{If  $\,y$ is a nonzero vector in $\,\mathbb{C}^2$  then
$\,\langle y|Uy\rangle\neq 0$.}
\end{lemma}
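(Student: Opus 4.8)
The plan is to compute $\langle y\,|\,Uy\rangle$ directly in the canonical basis and read off its real part. Writing $y=[y_1,y_2]$ with $y_1,y_2\in\mathbb{C}$, the diagonal form of $U$ gives $Uy=[\eta y_1,\bar\eta y_2]$, so that $\langle y\,|\,Uy\rangle=\eta|y_1|^2+\bar\eta|y_2|^2$. This is the only computation needed, and from here the conclusion follows by inspecting the real part.

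The key point is that $\eta=\exp[\mathrm{i}\frac{\pi}{4}]$ has strictly positive real part, namely $\mathrm{Re}\,\eta=\mathrm{Re}\,\bar\eta=\cos\frac{\pi}{4}=\frac{1}{\sqrt2}>0$. Taking the real part of the expression above therefore yields $\mathrm{Re}\langle y\,|\,Uy\rangle=\frac{1}{\sqrt2}\bigl(|y_1|^2+|y_2|^2\bigr)=\frac{1}{\sqrt2}\,\|y\|^2$, which is strictly positive whenever $y\neq0$. Hence $\langle y\,|\,Uy\rangle\neq0$, as claimed.

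There is essentially no obstacle here: the statement reduces to a one-line computation. The only feature worth emphasizing is the role of the specific phase $\frac{\pi}{4}$, which guarantees that both diagonal entries $\eta$ and $\bar\eta$ have positive real part, so the two nonnegative contributions $|y_1|^2$ and $|y_2|^2$ cannot cancel. Had $\eta$ been purely imaginary (phase $\frac{\pi}{2}$), the real part would vanish identically and the conclusion would fail; thus the lemma is really the observation that $U$ lies entrywise in the open right half-plane, and it is precisely this positivity that is invoked in the preceding proof of Theorem \ref{sqc} when concluding that $z_0=\langle y_2|Uy_2\rangle$ is nonzero.
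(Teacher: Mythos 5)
Your proof is correct and follows the same route as the paper: compute $\langle y|Uy\rangle=\eta|y_1|^2+\bar\eta|y_2|^2$ from the diagonal form of $U$ and conclude nonvanishing from the choice $\eta=\exp[\mathrm{i}\frac{\pi}{4}]$. You merely make explicit the real-part argument ($\mathrm{Re}\langle y|Uy\rangle=\frac{1}{\sqrt2}\|y\|^2>0$) that the paper leaves as a parenthetical remark.
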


\begin{proof}
Let $y=[y_1,y_2]$ then $Uy=[\eta y_1,\bar{\eta}y_2]$ and $\langle
y|Uy\rangle=|y_1|^2\eta+|y_2|^2\bar{\eta}\neq0$ (since
$\eta=\exp[\,\mathrm{i}\frac{\pi}{4}\,]$).
\end{proof}

By Proposition 2 in \cite{Sh&Sh} Theorem \ref{sqc-c} implies the
following assertion.\smallskip

\begin{corollary}\label{sqc-c} \emph{There is a pseudo-diagonal \footnote{A channel $\Phi:\S(\H_A)\rightarrow\S(\H_B)$ is called pseudo-diagonal
    if it has the representation
$$
\Phi(\rho)=\sum_{i,j}c_{ij}\langle
\psi_i|\rho|\psi_j\rangle|i\rangle\langle j|,\quad\rho\in\S(\H_A),
$$
where $\{c_{ij}\}$ is a Gram matrix of a collection of unit vectors,
$\{|\psi_i\rangle\}$ is a collection of vectors in $\H_A$ such that
$\;\sum_i |\psi_i\rangle\langle \psi_i|=I_{\H_A}\,$  and
$\{|i\rangle\}$ is an orthonormal basis in $\H_B$ \cite{R}.} channel
$\,\Phi$ with $\dim\H_A=4$, $\dim\H_E=3$, $\dim\H_B=12$ such that
$\,\mathcal{G}(\Phi)=\mathfrak{L}_0$ and hence
\begin{equation*}
    \bar{Q}_0(\Phi)=0,\quad\textrm{but}\quad
    \bar{Q}_0(\Phi\otimes\Phi)>0.
\end{equation*}}
\emph{The channel $\,\Phi\otimes\Phi$ is perfectly reversible on the
subspace
$\H_{t}=\mathrm{lin}\{|\varphi_{t}\rangle,|\psi_{t}\rangle\}$, where
$\varphi_{t},\psi_{t}$ are vectors defined in (\ref{vec+}), for each
given $\,t\in[0,2\pi)$.}
\end{corollary}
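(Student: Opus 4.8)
The plan is to transfer the algebraic content of Theorem~\ref{sqc} to the channel side through Lemma~\ref{trans-l+} and the graph-to-channel construction of Remark~\ref{trans-l+r}. First I would check the hypotheses of that construction for $\mathfrak{L}_0$: it satisfies (\ref{L-cond}), being self-adjoint (for $M\in\mathfrak{L}_0$ the operator $M^*$ has the same block form with $A,\lambda$ replaced by $A^*,\bar\lambda$) and containing the unit matrix (take $A=I_2,\ \lambda=0$). Its complex dimension is $\dim\mathfrak{L}_0=\dim\mathfrak{M}_2+1=5$, and since $2^2<5\le 3^2$ the construction of Remark~\ref{trans-l+r} (Corollary~1 in \cite{Sh&Sh} together with Proposition~\ref{cmp-c}) with $m=3$ yields an explicit pseudo-diagonal channel $\Phi$ with $\mathcal{G}(\Phi)=\mathfrak{L}_0$, $\dim\H_A=4$ and $\dim\H_E\le 3$; because the graph of a channel with $d_E$-dimensional environment has dimension at most $d_E^2$, the inequality $\dim\mathfrak{L}_0=5>2^2$ forces $\dim\H_E\ge 3$, so $\dim\H_E=3$, and the explicit form gives $\dim\H_B=12$.

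Next I would read off $\bar Q_0(\Phi)=0$. For a pair of unit vectors the $n=2$ case of (\ref{operators+}) in Lemma~\ref{trans-l+} coincides, thanks to $\mathfrak{L}_0=\mathfrak{L}_0^*$, with (\ref{l-3-c}) for $\mathfrak{L}=\mathcal{G}(\Phi)=\mathfrak{L}_0$, the orthogonality of the two vectors being automatic since $I\in\mathfrak{L}_0$ and $A=I$ in $\langle\psi|A\varphi\rangle=0$ gives $\langle\psi|\varphi\rangle=0$. Hence $\Phi$ is perfectly reversible on some two-dimensional subspace iff (\ref{l-3-c}) holds for $\mathfrak{L}_0$, which Theorem~\ref{sqc} excludes. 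As any subspace of a perfectly reversible subspace is again perfectly reversible, the absence of a two-dimensional one rules out perfect reversibility on any subspace of dimension $\ge 2$, whence $\bar Q_0(\Phi)=\sup_{\H_0\in q_0(\Phi)}\log\dim\H_0=0$.

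For the tensor square I would first identify its noncommutative graph. Up to isometric equivalence the channel complementary to $\Phi\otimes\Phi$ is $\widehat\Phi\otimes\widehat\Phi$ with environment $\H_E\otimes\H_E$, so
\[
\mathcal{G}(\Phi\otimes\Phi)=(\widehat\Phi\otimes\widehat\Phi)^*\big(\B(\H_E\otimes\H_E)\big)=\mathfrak{L}_0\otimes\mathfrak{L}_0,
\]
the middle equality holding because $(\widehat\Phi^*\otimes\widehat\Phi^*)(X\otimes Y)=\widehat\Phi^*(X)\otimes\widehat\Phi^*(Y)$ and $\B(\H_E\otimes\H_E)$ is the linear span of the simple tensors $X\otimes Y$. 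By Theorem~\ref{sqc} the equalities (\ref{one}) and (\ref{two}) hold for every simple tensor $M_1\otimes M_2$ with the vectors $\varphi_t,\psi_t$ of (\ref{vec+}); by bilinearity they extend to all of $\mathfrak{L}_0\otimes\mathfrak{L}_0$, i.e. (\ref{l-3-c}) is valid for $\mathfrak{L}=\mathcal{G}(\Phi\otimes\Phi)$. Applying the ``if'' direction of Lemma~\ref{trans-l+} to the orthogonal pair $\varphi_t,\psi_t$ then gives that $\Phi\otimes\Phi$ is perfectly reversible on $\H_t=\mathrm{lin}\{\varphi_t,\psi_t\}$ for each $t\in[0,2\pi)$, so $\bar Q_0(\Phi\otimes\Phi)\ge\log 2>0$.

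The step I expect to demand the most care is the identification $\mathcal{G}(\Phi\otimes\Phi)=\mathfrak{L}_0\otimes\mathfrak{L}_0$: one must confirm that passing to the complementary channel commutes with tensor products (valid up to the isometric equivalence to which the graph is insensitive) and that $(\widehat\Phi\otimes\widehat\Phi)^*$ sends $\B(\H_E\otimes\H_E)$ exactly onto the algebraic tensor product of the two graphs. The rest---the dimension count, the automatic orthogonality from $I\in\mathfrak{L}_0$, and the passage from simple tensors to general elements---is routine.
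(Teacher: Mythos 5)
Your proposal is correct and takes essentially the same route as the paper, which derives the corollary in one line from Proposition~2 of \cite{Sh&Sh} (the graph-to-channel construction elaborated in Remark~\ref{trans-l+r} and Proposition~\ref{cmp-c}) together with Theorem~\ref{sqc} via Lemma~\ref{trans-l+}. You merely make explicit the steps the paper leaves implicit --- the count $\dim\mathfrak{L}_0=5$ forcing $\dim\H_E=3$, and the identification $\mathcal{G}(\Phi\otimes\Phi)=\mathfrak{L}_0\otimes\mathfrak{L}_0$ --- all of which check out.
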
\medskip

\begin{remark}\label{sqc-r} It is easy to see that the above subspace $\mathfrak{L}_0$ is not
transitive. So, by Lemma 2 in \cite{Sh&Sh}, the corresponding
channel $\Phi$ has positive one-shot zero-error classical capacity
and hence this channel does not demonstrate the extreme
superactivation of one-shot zero-error capacity.
\end{remark}\medskip

To obtain a minimal Kraus representation of one of the channels
having properties stated in Corollary \ref{sqc-c} we have to find a
basis $\{A_i\}_{i=1}^5$ of $\mathfrak{L}_0$ such that $\,A_i\geq0$
for all $i$ and $\,\sum_{i=1}^5 A_i=I_4$. Such basis can be easily
found, for example
$$
A_1=\frac{1}{6}\left[\begin{array}{cccc}
1 &  0 & \bar{\eta} &  0\\
0 &  2 & 0 &  \eta\\
\eta &  0 & 1 &  0\\
0 &  \bar{\eta} & 0 &  2
\end{array}\right]\!\!,
A_2=\frac{1}{6}\left[\begin{array}{rrrr}
\!1 & \!0 & \!-\bar{\eta} &  \!0\\
\!0 & \!2 & \!0 &  \!-\eta\\
\!-\eta & \!0 & \!1 &  \!0 \\
\!0 & \!-\bar{\eta} & \!0 &  \!2
\end{array}\right]\!\!,
A_3=\frac{5}{9}\left[\begin{array}{cccc}
1 & 0 & 0 & 0\\
0 & 0 & 0 & 0\\
0 & 0 & 1 & 0\\
0 & 0 & 0 & 0
\end{array}\right]\!\!,
$$
$$
A_4=\frac{1}{18}\left[\begin{array}{cccc}
1 &  \sqrt{3} & 0 &  0\\
\sqrt{3} &  3 & 0 &  0\\
0 &  0 & 1 &  \sqrt{3}\\
0 &  0 &  \sqrt{3} &  3
\end{array}\right]\!\!,
A_5=\frac{1}{18}\left[\begin{array}{cccc}
1 &  -\sqrt{3} & 0 &  0\\
-\sqrt{3} &  3 & 0 &  0\\
0 &  0 & 1 &  -\sqrt{3}\\
0 &  0 & -\sqrt{3} &  3
\end{array}\right]\!\!.
$$

We also have to chose a collection $\,\{|\psi_i\rangle\}_{i=1}^5$ of
unit vectors in $\mathbb{C}^3$ such that
$\,\{|\psi_i\rangle\langle\psi_i|\}_{i=1}^5$ is a linearly
independent subset of $\,\mathfrak{M}_3$. Let
$$
|\psi_1\rangle=|1\rangle,\;\, |\psi_2\rangle=|2\rangle,\;\,
|\psi_3\rangle=|3\rangle,\;\,
|\psi_4\rangle=\textstyle{\frac{1}{\sqrt{2}}}|1+3\rangle,\;\,
|\psi_5\rangle=\textstyle{\frac{1}{\sqrt{2}}}|2+3\rangle,
$$
where $\{|1\rangle,|2\rangle,|3\rangle\}$ is the canonical basis in
$\mathbb{C}^3$.

Now, by noting that $r_i=\rank A_i=3$ for $i=1,2$ and $r_i=\rank
A_i=2$ for $i=3,4,5$, we can apply Proposition \ref{cmp-c} in the
Appendix to obtain a minimal Kraus representation for
pseudo-diagonal channel $\Phi$ having properties stated in Corollary
\ref{sqc-c}. Direct calculation gives the following Kraus operators
$$
V_1=\frac{1}{6}\left[\begin{array}{cccc}
\sqrt{6} &  0 & \sqrt{6}\bar{\eta} &  0\\
0 &  \alpha & 0 &  \beta\\
0 &  \bar{\beta} & 0 &  \alpha\\
0 &  0 & 0 &  0\\
0 &  0 & 0 &  0\\
0 &  0 & 0 &  0\\
0 &  0 & 0 &  0\\
0 &  0 & 0 &  0\\
1 &  \sqrt{3} & 0 &  0\\
0 &  0 & 1 &  \sqrt{3}\\
0 &  0 & 0 &  0\\
0 &  0 & 0 &  0
\end{array}\right]\!\!,\quad
V_2=\frac{1}{6}\left[\begin{array}{cccc}
0 &  0 & 0 &  0\\
0 &  0 & 0 &  0\\
0 &  0 & 0 &  0\\
\sqrt{6} &  0 & -\sqrt{6}\bar{\eta} &  0\\
0 &  \alpha & 0 &  -\beta\\
0 &  -\bar{\beta} & 0 &  \alpha\\
0 &  0 & 0 &  0\\
0 &  0 & 0 &  0\\
0 &  0 & 0 &  0\\
0 &  0 & 0 &  0\\
1 &  -\sqrt{3} & 0 &  0\\
0 &  0 & 1 &  -\sqrt{3}
\end{array}\right],
$$
$$
V_3=\frac{1}{6}\left[\begin{array}{cccc}
0 &  0 & 0 &  0\\
0 &  0 & 0 &  0\\
0 &  0 & 0 &  0\\
0 &  0 & 0 &  0\\
0 &  0 & 0 &  0\\
0 &  0 & 0 &  0\\
2\sqrt{5} &  0 & 0 &  0\\
0 &  0 & 2\sqrt{5} &  0\\
1 &  \sqrt{3} & 0 &  0\\
0 &  0 & 1 &  \sqrt{3}\\
1 &  -\sqrt{3} & 0 &  0\\
0 &  0 & 1 &  -\sqrt{3}
\end{array}\right],\medskip
$$
where $\alpha=\displaystyle\frac{3+\sqrt{3}}{\sqrt{2}}$ and
$\beta=\displaystyle\eta\,\frac{3-\sqrt{3}}{\sqrt{2}}$
$\left(\,\eta=e^{\mathrm{i}\frac{\pi}{4}}\right)$. Thus,
$\,\Phi(\rho)=\displaystyle\sum_{k=1}^3 V_k\rho V_k^*$.

\section{Superactivation with  $\,\bar{Q}_0(\Phi\otimes\Phi)\geq\log n$}

By generalizing the above construction one can obtain the following
result.\smallskip

\begin{theorem}\label{sqc+} \emph{Let
$\,\dim\H_A=2n\leq+\infty$, $\,\{|k\rangle\}_{k=1}^{2n}$ an
orthonormal basis in $\H_A$, and $\,m\,$ the minimal natural number
such that $\;n^2-n+4\,\leq\, m^2$ if $\,n<+\infty\,$ and
$\,m=+\infty\,$ otherwise.}

\emph{There exists a pseudo-diagonal channel
$\,\Phi:\S(\H_A)\rightarrow\S(\H_B)$ with $\,\dim\H_E=m$ such that
$\,\bar{Q}_0(\Phi)=0$ while the channel $\,\Phi\otimes\Phi$ is
perfectly reversible on the subspace of $\,\H_A\otimes\H_A$ spanned
by the vectors
\begin{equation}\label{vec++}
|\varphi_k^{t}\rangle=\textstyle\frac{1}{\sqrt{2}}\left[
|2k-1\rangle\otimes|2k-1\rangle+e^{\mathrm{i}t}|2k\rangle\otimes|2k\rangle\right],\quad
k=1,2,\ldots,n,
\end{equation}
where $\,t$ is a fixed number in $\,[0,2\pi)$, and hence
$\,\bar{Q}_0(\Phi\otimes\Phi)\geq\log n$. }
\end{theorem}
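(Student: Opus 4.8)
The plan is to reproduce, for $n$ sectors, the construction and the two halves of the argument used for Theorem \ref{sqc}. Identify $\H_A=\mathbb{C}^{2n}$ with $\bigoplus_{k=1}^{n}\mathbb{C}^2$, the $k$-th summand being spanned by $|2k-1\rangle,|2k\rangle$, and let $\mathfrak{L}=\mathcal{G}(\Phi)$ consist of all block matrices $M=[M^{[ab]}]_{a,b=1}^{n}$, $M^{[ab]}\in\mathfrak{M}_2$, whose diagonal blocks are all equal to one common $A\in\mathfrak{M}_2$ and whose off-diagonal blocks have the form $M^{[jk]}=\lambda_{jk}U^*$, $M^{[kj]}=\mu_{jk}U$ for $j<k$, where $U=\mathrm{diag}(\eta,\bar\eta)$, $\eta=e^{\mathrm{i}\pi/4}$, and $A$, $\lambda_{jk}$, $\mu_{jk}$ are free parameters. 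One checks at once that $\mathfrak{L}^*=\mathfrak{L}$ and $I_{2n}\in\mathfrak{L}$, so (\ref{L-cond}) holds, and that $\dim\mathfrak{L}=4+2\binom{n}{2}=n^2-n+4$. Hence $\dim\mathfrak{L}\le m^2$, and by Remark \ref{trans-l+r} together with Proposition \ref{cmp-c} there is a pseudo-diagonal channel $\Phi$ with $\mathcal{G}(\Phi)=\mathfrak{L}$ and $\dim\H_E=m$; it remains to prove the two capacity assertions.

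For the reversibility of $\Phi\otimes\Phi$ on the span of the (evidently orthonormal) vectors (\ref{vec++}) I would, as in Theorem \ref{sqc}, verify for all $M_1,M_2\in\mathfrak{L}$ that $\langle\varphi_j^{t}|M_1\otimes M_2|\varphi_k^{t}\rangle=0$ for $j\neq k$ and that $\langle\varphi_j^{t}|M_1\otimes M_2|\varphi_j^{t}\rangle$ is independent of $j$, and then apply Lemma \ref{trans-l+} (the products $M_1\otimes M_2$ spanning $\mathcal{G}(\Phi\otimes\Phi)=\mathfrak{L}\otimes\mathfrak{L}$). When $j=k$ the value depends only on the common diagonal block of each factor and so does not depend on $j$; when $j\neq k$ the relevant block is $\lambda U^*$ or $\mu U$, a diagonal matrix $\mathrm{diag}(d_1,d_2)$ with $d_1^2+d_2^2=0$ because $\eta^2+\bar\eta^2=0$, and the cancellation of (\ref{one}) applies verbatim. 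This gives $\bar Q_0(\Phi\otimes\Phi)\ge\log n$.

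The substantive point is that (\ref{l-3-c}) fails for $\mathfrak{L}$. Writing $\varphi=(x_a)_{a=1}^n$, $\psi=(y_a)_{a=1}^n$ with $x_a,y_a\in\mathbb{C}^2$ and testing the two relations of (\ref{l-3-c}) on the generators of $\mathfrak{L}$, I would extract the necessary conditions (I)~$\sum_a|x_a\rangle\langle y_a|=0$; (III)~$\sum_a|x_a\rangle\langle x_a|=\sum_a|y_a\rangle\langle y_a|$; and, for $j<k$, (II-a)~$\langle y_j|U^*x_k\rangle=0$ and (II-b)~$\langle y_k|Ux_j\rangle=0$. The goal is to force $\varphi=0$ or $\psi=0$, and the main obstacle is that, unlike in the two-sector case, $\varphi,\psi$ may a priori be spread over all sectors; the heart of the matter is a collapse to a single direction. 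Let $s$ be the first sector with $x_s\neq0$ or $y_s\neq0$. Applying (II-a),(II-b) with $j=s$ makes every $x_k$ ($k>s$) parallel to $u'=(Uy_s)^\perp$ and every $y_k$ ($k>s$) parallel to $v'=(Ux_s)^\perp$. If one of $x_s,y_s$ vanishes, one of the $2\times n$ matrices $[x_1|\cdots|x_n]$, $[y_1|\cdots|y_n]$ already has rank $\le 1$, and (III) forces the other to have rank $\le1$ too; if both are nonzero, (I) becomes $x_sy_s^*+E\,u'v'^*=0$ for some $E\in\mathbb{C}$, and since $x_sy_s^*$ has rank one this rules out $E=0$ and forces $x_s\parallel u'$, $y_s\parallel v'$. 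In every case all $x_a$ lie on one line and all $y_a$ on one line, and (III) makes these two lines coincide.

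It then remains to settle this ``all parallel'' case exactly as in Theorem \ref{sqc}. Writing $x_a=\xi_a u$, $y_a=\zeta_a u$ for a common unit vector $u$ and invoking Lemma \ref{sl} (so that $\langle u|Uu\rangle\neq0$), the conditions (II-a),(II-b) reduce to $\zeta_j\xi_k=0$ and $\zeta_k\xi_j=0$ for all $j<k$, while (I) reduces to $\sum_a\xi_a\bar\zeta_a=0$. Putting $p=\min\{a:\xi_a\neq0\}$ and $q=\min\{a:\zeta_a\neq0\}$, the first relations force $\psi=0$ when $p<q$ and $\varphi=0$ when $q<p$, while if $p=q$ they confine both vectors to the single sector $p$, so that $\sum_a\xi_a\bar\zeta_a=\xi_p\bar\zeta_p\neq0$ contradicts (I). This contradiction establishes that (\ref{l-3-c}) fails, i.e. $\bar Q_0(\Phi)=0$, and completes the proof; the infinite case $n=+\infty$ is covered by the same argument with $m=+\infty$.
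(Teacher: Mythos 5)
Your finite-dimensional argument is correct and follows the paper's overall strategy: the same graph $\mathfrak{L}_n$, the same dimension count $n^2-n+4$, and the same verification of the relations (\ref{one++})--(\ref{two++}) via the identity $\eta^2+\bar\eta^2=0$. Where you genuinely diverge is in how you establish the collinearity collapse showing that (\ref{l-3-c}) fails. The paper first argues that it suffices to prove that either all $x_a$ or all $y_a$ are collinear, and then splits into three cases according to how the nonzero components are distributed over the sectors, invoking $2$-transitivity of $\mathfrak{M}_2$ in the residual two-sector case. You instead observe that the first relation of (\ref{l-3-c}) is exactly $\sum_a|x_a\rangle\langle y_a|=0$ and exploit the rank-one structure of this sum once (II-a) and (II-b) with $j=s$ have pinned the tails $(x_k)_{k>s}$ and $(y_k)_{k>s}$ to the single lines $u'$ and $v'$: this forces $x_s\parallel u'$ and $y_s\parallel v'$ directly, with no case analysis on the supports and no appeal to transitivity. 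The endgame ($\zeta_a\xi_b=0$ for $a\neq b$ together with $\sum_a\xi_a\bar\zeta_a=0$, using Lemma \ref{sl}) is a correct and slightly more streamlined version of the paper's concluding step. So for $n<+\infty$ your proof is complete, and arguably cleaner than the original.

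The gap is the case $n=+\infty$, which the theorem also asserts and which you dismiss with ``the same argument with $m=+\infty$''. This does not work as stated. For $n=+\infty$ the set of block matrices (\ref{L_n}) with arbitrary coefficients $\lambda_{ij}$ does not consist of bounded operators, so one must first impose a summability condition such as (\ref{l-req}) and pass to the operator-norm closure $\overline{\mathcal{L}}_{*}$. More importantly, Remark \ref{trans-l+r}, Proposition \ref{cmp-c} and Proposition 2 of \cite{Sh&Sh}, which you use to produce $\Phi$ from $\mathfrak{L}$, are statements about subspaces of $\mathfrak{M}_n$ and give you no channel when $\dim\H_A=+\infty$. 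The paper has to build the channel by hand: it uses separability of $\overline{\mathcal{L}}_{*}$ to extract a countable family of positive operators generating it and summing to $I_{\H_A}$ in operator norm, defines the corresponding entanglement-breaking channel $\Psi$, takes $\Phi$ complementary to $\Psi$, and then verifies the relations (\ref{operators+}) only on a dense subset, which suffices by the ``weak operator closure'' clause of Lemma \ref{trans-l+}. Your algebraic computations do carry over verbatim to $n=+\infty$, but without this construction the channel whose capacities you are computing has not been shown to exist.
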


\begin{proof} Assume first that $\,n<+\infty$. Consider the subspace
\begin{equation}\label{L_n}
\mathfrak{L}_n = \left\{\,M=\left[\begin{array}{ccccc}
 A              &  \lambda_{12} U^* &  \cdots & \lambda_{1n} U^*\\
 \lambda_{21} U &                 A &  \cdots & \lambda_{2n}
 U^*\\\cdots & \cdots & \cdots  & \cdots \\
\lambda_{n1} U &  \lambda_{n2} U   &  \cdots & A
 \end{array}\right],\;\; A\in \mathfrak{M}_2,\; \lambda_{ij}\in\mathbb{C}\;\right\}
\end{equation}
of $\mathfrak{M}_{2n}$, where  $U$ is the unitary operator in
$\mathbb{C}^2$ defined in the previous section (it has the matrix
$\,\mathrm{diag}\{\eta,\bar{\eta}\}\,$ in the canonical basis in
$\mathbb{C}^2$, $\,\eta=\exp[\,\mathrm{i}\frac{\pi}{4}\,]$).

The subspace $\mathfrak{L}_n$ satisfies condition (\ref{L-cond}) and
$\dim\mathfrak{L}_n=n^2-n+4$. So, by Proposition 2 in \cite{Sh&Sh},
there is a pseudo-diagonal channel $\Phi$ with $\dim\H_A=2n$ and
$\dim\H_E=m$, where $\,m\,$ is the minimal number satisfying the
inequality $\,n^2-n+4\,\leq\, m^2$, such that
$\mathcal{G}(\Phi)=\mathfrak{L}_n$.\smallskip

We will prove that $\bar{Q}_0(\Phi)=0\,$ by showing that condition
(\ref{l-3-c}) is not valid for $\mathfrak{L}=\mathfrak{L}_n$.

Assume there exist unit vectors $\varphi=[x_1, x_2,\ldots,x_n]$ and
$\psi=[y_1, y_2,\ldots,y_n]$, $x_i,y_i\in\mathbb{C}^2$, such that
$\langle\psi | M  \varphi\rangle=0$ and $\langle\psi | M
\psi\rangle=\langle\varphi | M  \varphi\rangle$ for all
   $M\in\mathfrak{L}_n$. It follows that
\begin{equation}\label{e-1+}
    \sum_{i=1}^n\langle y_i|Ax_i\rangle=0\quad\;
\forall
   A\in\mathfrak{M}_2,
\end{equation}
\begin{equation}\label{e-2+}
    \langle y_i|U^*x_k\rangle=0,\quad \forall\;  k >1,\;  i <
    k,
\end{equation}
\begin{equation}\label{e-2++}
    \langle y_i|Ux_k\rangle=0,\quad\;\, \forall\;  k < n,\;  i >
    k,
\end{equation}
and
\begin{equation}\label{e-3+}
    \sum_{i=1}^n\langle y_i|Ay_i\rangle=\sum_{i=1}^n\langle x_i|Ax_i\rangle\quad
\forall
   A\in\mathfrak{M}_2.
\end{equation}
Note that (\ref{e-3+}) means that
\begin{equation}\label{e-3++}
    \sum_{i=1}^n|y_i\rangle\langle y_i|=\sum_{i=1}^n|x_i\rangle\langle
    x_i|.
\end{equation}
It suffices to show that
\begin{equation}\label{collin}
\textup{either}\quad x_1\parallel x_2\parallel x_3\parallel\ldots
\parallel x_n\quad \textup{or}\quad y_1\parallel y_2\parallel
y_3\parallel\ldots \parallel y_n,
\end{equation}
since this and (\ref{e-3++}) imply $x_i\parallel y_j$ for all $i,j$,
which, by Lemma \ref{sl} in Section 2, contradicts to (\ref{e-2+})
and (\ref{e-2++}) (if $x_i=y_i=0$ for all $i\neq k$ then
 $\langle y_k|x_k\rangle=\langle \psi|\varphi\rangle=0$).

We will consider that the both vectors $\varphi$ and $\psi$ have at
least two nonzero components (since otherwise (\ref{collin})
obviously holds).

Let $k$ be the minimal number such that $x_i=y_i=0$ for all $i<k$
and either $x_k$ or $y_k$ is nonzero.

By symmetry we may assume that $x_k\neq0$. Then (\ref{e-2++})
implies
\begin{equation}\label{collin+}
y_{k+1}\parallel y_{k+2}\parallel\ldots \parallel y_n.
\end{equation}

If $\,y_k=0$ then this means (\ref{collin}). If $\,y_k\neq0$ then we
have the following three cases.

Case 1: $x_i\neq0$ and $y_j\neq0$, where $i>j>k$. In this case
(\ref{e-2+}) with $k=i$ shows that
\begin{equation*}
y_{k}\parallel y_{k+1}\parallel\ldots \parallel y_{i-1}.
\end{equation*}
Since $y_j\neq0$  and $\,i\geq k+2$, this and (\ref{collin+}) imply
(\ref{collin}).

Case 2:  $x_i\neq0$ and $y_j\neq0$, where $j>i>k$. Since $x_k\neq0$
and $y_k\neq0$, this case is reduced to the previous one by
permuting $\varphi$ and $\psi$.

Case 3: $x_i=y_i=0$ for all $i>k$ excepting $i=l>k$. In this case
(\ref{e-1+}) implies
$$
\langle y_k|Ax_k\rangle+\langle y_l|Ax_l\rangle=0\quad \forall
   A\in\mathfrak{M}_2.
$$
If $x_k\nparallel x_l$ then, by 2-transitivity of $\mathfrak{M}_2$,
there is $A_0\in\mathfrak{M}_2$ such that $y_k=A_0x_k$ and
$y_l=A_0x_l$ \cite{DMR}. So, the above equality  implies $\langle
y_k|y_k\rangle+\langle y_l|y_l\rangle=0$, which contradicts to the
assumption $y_k\neq0$. Thus $x_k\parallel x_l$ and (\ref{collin})
holds.

Now we will show that
\begin{equation}\label{one++}
    \langle\varphi_k^{t} | M_1\otimes M_2 \, \varphi_l^{t}\rangle=0\quad \forall
    M_1,M_2\in\mathfrak{L}_n,\;k\neq l
\end{equation}
and
\begin{equation}\label{two++}
    \langle\varphi_k^{t} | M_1\otimes M_2 \,\varphi_k^{t}\rangle=
    \langle\varphi_l^{t} | M_1\otimes M_2 \,\varphi_l^{t}\rangle\quad \forall
    M_1,M_2\in\mathfrak{L}_n,\;k\neq l,
\end{equation}
for the family $\{\varphi_k^{t}\}_{k=1}^{n}$ of vectors defined in
(\ref{vec++}). By Lemma \ref{trans-l+} these relations mean perfect
reversibility of the channel $\Phi\otimes\Phi$ on the subspace
spanned by this family, which implies
$\bar{Q}_0(\Phi\otimes\Phi)\geq\log n$.\smallskip

Let $|\xi^k_i\rangle=|0,\ldots,0,e_i,0,\ldots,0\rangle$ be a vector
in
$\mathbb{C}^{2n}=[\mathbb{C}^2\oplus\mathbb{C}^2\oplus\ldots\oplus
\mathbb{C}^2]$, where $e_i$ is in the $k$-th position ($\{e_1,e_2\}$
is the canonical basis in $\mathbb{C}^2$). Then
\begin{equation*}
|\varphi_k^{t}\rangle=\textstyle{\frac{1}{\sqrt{2}}}\left[\;|\xi^k_1\rangle\otimes
|\xi^k_1\rangle+e^{\mathrm{i}t}|\xi^k_2\rangle\otimes
|\xi^k_2\rangle\,\right],\quad k=1,2,...,n.
\end{equation*}
By setting $\alpha_1=1$ and $\alpha_2=e^{\mathrm{i}t}$, we have
\begin{equation}\label{int++}
M_1\otimes M_2 |\varphi_k^{t}\rangle=\frac{1}{\sqrt{2}}\sum_{j=1}^2
\alpha_j |\psi(1,k,j)\rangle\otimes |\psi(2,k,j)\rangle,
\end{equation}
where
$$
|\psi(r,k,j)\rangle=|\lambda^r_{1k}U^*e_j,\lambda^r_{2k}U^*e_j,\ldots,\lambda_{[k-1]k
}U^*e_j,
A^re_j,\lambda^r_{[k+1]k}Ue_j,\ldots,\lambda^r_{nk}Ue_j\rangle,
$$
$r=1,2$ ($A^r,\lambda^r_{ij}$ correspond to the matrix $M_r$). If
$\,l>k\,$ then
$$
\begin{array}{c}
\displaystyle\langle\varphi_l^{t}| M_1\otimes M_2
\,\varphi_k^{t}\rangle=\frac{1}{2}\sum_{i,j=1}^2
\bar{\alpha}_i\alpha_j \langle \xi^l_i|\otimes \langle \xi^l_i|\cdot|\psi(1,k,j)\rangle\otimes|\psi(2,k,j)\rangle\\
\displaystyle=\frac{1}{2}\lambda^1_{lk}\lambda^2_{lk} \sum_{i,j=1}^2
\bar{\alpha}_i\alpha_j \langle e_i |Ue_j\rangle\langle e_i
|Ue_j\rangle=\frac{1}{2}\,\lambda^1_{lk}\lambda^2_{lk}\left[\,\eta^2|\alpha_1|^2+\bar{\eta}^2|\alpha_2|^2\,\right]=0,\quad
\end{array}
$$
Thus (\ref{one++}) is valid for $l>k$ and hence for all $l\neq k$.
It follows from (\ref{int++}) that
\begin{equation}\label{two+++}
\begin{array}{c}
\displaystyle\langle\varphi_k^{t}| M_1\otimes M_2
\,\varphi_k^{t}\rangle=\frac{1}{2}\sum_{i,j=1}^2
\bar{\alpha}_i\alpha_j \langle \xi^k_i|\otimes \langle \xi^k_i|\cdot|\psi(1,k,j)\rangle\otimes|\psi(2,k,j)\rangle\\
\displaystyle=\frac{1}{2}\sum_{i,j=1}^2 \bar{\alpha}_i\alpha_j
 \langle e_i | A^1  e_j\rangle\langle e_i |A^2
e_j\rangle.
\end{array}
\end{equation}
and that
\begin{equation*}
\begin{array}{c}
\displaystyle\langle\varphi_l^{t}| M_1\otimes M_2
\,\varphi_l^{t}\rangle=\frac{1}{2}\sum_{i,j=1}^2
\bar{\alpha}_i\alpha_j\langle \xi^l_i|\otimes \langle
\xi^l_i|\cdot|\psi(1,l,j)\rangle\otimes|\psi(2,l,j)\rangle\\
\displaystyle=\frac{1}{2}\sum_{i,j=1}^2
\bar{\alpha}_i\alpha_j\langle e_i | A^1  e_j\rangle\langle e_i | A^2
 e_j\rangle.
\end{array}
\end{equation*}
This equality and
 (\ref{two+++}) imply (\ref{two++}).\medskip

Consider the case $n=+\infty$. Let $\H_A$ be a separable Hilbert
space represented as a countable direct sum of 2-D Hilbert spaces
$\mathbb{C}^2$. So, each operator in $\B(\H_A)$ can be identified
with infinite block matrix satisfying a particular "boundedness"
condition.

Let $\mathcal{L}_{*}$ be the set of all infinite block matrices $M$
defined in (\ref{L_n}) with $n=+\infty$ satisfying the condition
\begin{equation}\label{l-req}
    \Lambda^2=\sum_{i=1}^{+\infty}\sum_{j\neq i}|\lambda_{ij}|^2<+\infty.
\end{equation}
This condition guarantees boundedness of the corresponding operator
due to the following easily-derived inequality
\begin{equation}\label{2-ineq}
    \|M\|_{\B(\H_A)}^2\,\leq\,
    2\left[\,\|A\|_{\B(\mathbb{C}^2)}^2+\Lambda^2\,\right].
\end{equation}

Let $\overline{\mathcal{L}}_{*}$ be the operator norm closure of
$\mathcal{L}_{*}$. It is clear that $\overline{\mathcal{L}}_{*}$ is
a symmetric subspace of $\B(\H_A)$ containing the unit operator
$I_{\H_A}$. By using inequality (\ref{2-ineq}) it is easy to show
separability of the subspace $\overline{\mathcal{L}}_{*}$ in the
operator norm topology (as a countable dense subset of
$\overline{\mathcal{L}}_{*}$ one can take the set of all matrices
$M$ in which $A$ and all $\lambda_{ij}$ have rational components).

Symmetricity and separability of $\overline{\mathcal{L}}_{*}$ imply
(by the proof of Proposition 2 in \cite{Sh&Sh}) existence of a
countable subset
$\{\tilde{M}_i\}_{i=2}^{+\infty}\subset\overline{\mathcal{L}}_{*}$
of positive operators generating $\overline{\mathcal{L}}_{*}$ (i.e.
such that the operator norm closure of all linear combinations of
the operators $\tilde{M}_i$ coincides with
$\overline{\mathcal{L}}_{*}$). Let
$M_i=2^{-i}\|\tilde{M}_i\|^{-1}\tilde{M}_i$, $i=2,3,...$. Since
$I_{\H_A}\in\overline{\mathcal{L}}_{*}$ and the series
$\sum_{i=2}^{+\infty}M_i$ converges in the operator norm topology,
the positive operator $M_1=I_{\H_A}-\sum_{i=2}^{+\infty}M_i$ lies in
$\overline{\mathcal{L}}_{*}$. Thus, $\{M_i\}_{i=1}^{+\infty}$ is a
countable subset of $\overline{\mathcal{L}}_{*}\cap\B_{+}(\H_A)$
generating the subspace $\overline{\mathcal{L}}_{*}$ such that
\begin{equation}\label{ser}
 \sum_{i=1}^{+\infty}M_i=I_{\H_A},
\end{equation}
where the series converges in the operator norm topology.\smallskip

Let $\{|e_i\rangle\}_{i=1}^{+\infty}$ be an orthonormal basis in a
separable Hilbert space $\H_B$. Consider the unital completely
positive map
$$
\B(\H_B)\ni X\mapsto\Psi^*(X) = \sum_{i=1}^{+\infty}\langle e_i|X
e_i\rangle M_i\in\B(\H_A).
$$
Apparently all  $M_i$ lie in
$\mathrm{Ran}\Psi^*\doteq\Psi^*(\B(\H_B))$. Since the series in
(\ref{ser}) converges in the operator norm topology, $\mathrm{Ran}
\Psi^* \subseteq \,\overline{\mathcal{L}}_{*}$. Hence $\mathrm{Ran}
\Psi^*$ is a dense subset of $\,\overline{\mathcal{L}}_{*}$.

The predual map
$$
\T(\H_A)\ni \rho\mapsto\Psi(\rho) =
\sum_{i=1}^{+\infty}[\hspace{1pt}\Tr M_i\rho\,] |e_i\rangle\langle
e_i|\in\T(\H_B)
$$
is an entanglement-breaking quantum channel. Let $\Phi$ be the
complementary channel to $\Psi$, so that $\Phi$ is pseudo-diagonal
channel and $\mathcal{G}(\Phi)=\mathrm{Ran}\Psi^*$.

To prove that $\bar{Q}_0(\Phi)=0$ it suffices to show, by Lemma
\ref{trans-l+}, that condition (\ref{l-3-c}) is not valid for
$\mathfrak{L}=\mathfrak{L}_*$ (since $\mathfrak{L}_*$ and
$\mathrm{Ran}\Psi^*$ are dense in $\overline{\mathfrak{L}}_*$). This
can be done by repeating the arguments from the proof of the same
assertion in the case $n<+\infty$.\medskip

The vectors defined in (\ref{vec++}) with $n=+\infty$ are
represented as follows
\begin{equation*}
|\varphi_k^{t}\rangle=\textstyle{\frac{1}{\sqrt{2}}}\left[\;|\xi^k_1\rangle\otimes
|\xi^k_1\rangle+e^{\mathrm{i}t}|\xi^k_2\rangle\otimes
|\xi^k_2\rangle\,\right],\quad k=1,2,3,...,
\end{equation*}
where $|\xi^k_i\rangle=|0,\ldots,0,e_i,0,0,\ldots\rangle$ is a
vector in $\H_A=[\mathbb{C}^2\oplus\mathbb{C}^2\oplus\ldots\oplus
\mathbb{C}^2\oplus\ldots]$ containing $e_i$ in the $k$-th position
($\{e_1,e_2\}$ is the canonical basis in $\mathbb{C}^2$).\smallskip

Since $\mathrm{Ran} \Psi^*$ is a dense subset of
$\overline{\mathfrak{L}}_*$,
$\mathrm{Ran}\left[\Psi^*\otimes\Psi^*\right]$ is a dense subset of
$\overline{\mathfrak{L}}_*\bar{\otimes}\overline{\mathfrak{L}}_*$
(where $\bar{\otimes}$ denotes the spacial tensor product). So,  to
prove that the channel $\Phi\otimes\Phi$ is perfectly reversible on
the subspace  spanned by the family
$\{|\varphi_k^{t}\rangle\}_{k=1}^{+\infty}$ it suffices to show, by
Lemma \ref{trans-l+}, that that relations (\ref{operators+}) hold
for any pair $|\varphi_k^{t}\rangle,|\varphi_l^{t}\rangle$ and
$\mathfrak{L}=\{M_1\otimes M_2\,|\,M_1,M_2\in\mathfrak{L}_*\}$. This
can be done by the same way as in the proof of the similar relations
in the case $\,n<+\infty$.
\end{proof}

\section{One property of quantum measurements}

In this section we will show that the effect of superactivation of
one-shot zero-error quantum capacity has a counterpart in the theory
of quantum measurements.

In accordance with the basic postulates of quantum mechanics any
measurement of a quantum system associated with a Hilbert space $\H$
corresponds to a Positive Operator Valued Measure (POVM) also called
(generalized) \emph{quantum observable} \cite{H-SCI,N&Ch}. A quantum
observable with finite or countable set of outcomes is a discrete
resolution of the identity in $\B(\H)$, i.e. a set
$\{M_i\}_{i=1}^m$, $m\leq+\infty$, of positive operators in $\H$
such that $\sum_{i=1}^mM_i=I_{\H}$. An observable is called
\emph{sharp} if it corresponds to an orthogonal resolution of the
identity (in this case $\{M_i\}_{i=1}^m$ consists of mutually
orthogonal projectors).

If an observable  $\M=\{M_i\}_{i=1}^m$ is applied to a quantum
system in a given state $\rho$  then the probability of $i$-th
outcome is equal to $\Tr M_i\rho$. So, we may consider the
observable  $\M$ as the quantum-classical channel
$$
\S(\H)\ni\rho\mapsto\pi_{\M}(\rho)=\{\Tr
M_i\rho\}_{i=1}^m\in\mathfrak{P}_m,
$$
where $\mathfrak{P}_m$ is the set of all probability distributions
with $m$ outcomes.

In the theory of quantum measurements  the notion of
\emph{informational completeness} of an observable and its
modifications are widely used \cite{B,C&Co,P}. An observable $\M$ is
called informational complete if for any two different states
$\rho_1$ and $\rho_2$ the probability distributions
$\pi_{\M}(\rho_1)$ and $\pi_{\M}(\rho_2)$ are different.\smallskip

Informational non-completeness of an observable can be characterized
by the following notion.\footnote{We would be grateful for any
references concerning original definition of this notion.}\smallskip
\begin{definition}\label{und-s}
A subspace $\H_0\subset\H$ is called \emph{indistinguishable} for an
observable $\M$ if $\,\pi_{\M}(\rho_1)=\pi_{\M}(\rho_2)$ for any
states $\rho_1$ and $\rho_2$ supported by $\H_0$.
\end{definition}\smallskip

If $\M=\{M_i\}$ is a sharp observable then all its indistinguishable
subspaces coincide with the ranges of the projectors $M_i$ of rank
$\geq 2$. So, a sharp observable has no indistinguishable subspaces
if and only if it consists of one rank projectors. This is not true
for unsharp observables (see the example at the end of this
section).\smallskip

To describe indistinguishable subspaces of a given observable one
can use the following characterization of such subspaces.\smallskip
\begin{property}\label{und-s-c}
\emph{Let $\M=\{M_i\}_{i=1}^m$, $m\leq+\infty$, be an observable in
a Hilbert space $\H$ and $\H_0$  a subspace of $\,\H$. The following
statements are equivalent:}
\begin{enumerate}[(i)]
    \item \emph{$\H_0$ is an indistinguishable subspace for the observable $\M$;}
    \item \emph{$\langle\psi|M_i\varphi\rangle=0\,$ for all $\,i$ and any orthogonal
    vectors $\,\varphi,\psi\in\H_0$;}
    \item \emph{there exists  an orthonormal basis $\,\{|\varphi_k\rangle\}$ in
    $\H_0$ such that}
$$
\langle\varphi_k|M_i\varphi_j\rangle=0\quad and\quad
\langle\varphi_k|M_i\varphi_k\rangle=\langle\varphi_j|M_i\varphi_j\rangle\quad
\forall i,j,k.
$$
\end{enumerate}
\end{property}

\begin{proof} It suffices to note that the subspace $\H_0$
is  indistinguishable for the observable $\M$ if and only if the
quantum channel
\begin{equation}\label{m-ch}
\S(\H)\ni\rho\mapsto\sum_{i=1}^m[\Tr M_i\rho]|i\rangle\langle
i|\in\S(\H_m),
\end{equation}
where $\{|i\rangle\}$ is an orthonormal basis in the $m$-dimensional
Hilbert space $\H_m$, has completely depolarizing restriction to the
subset $\S(\H_0)\subset\S(\H)$ and to use the well known
characterizations of completely depolarizing channels.
\end{proof}

Nonexistence of indistinguishable subspaces for a quantum observable
can be treated as recognition quality of this observable. So, if we
have two observables $\M_1$ and $\M_2$ having no indistinguishable
subspaces it is natural to ask about the existence of
indistinguishable subspaces for their tensor product $\M_1\otimes
\M_2$.\footnote{If $\H_1$ and $\H_2$ are indistinguishable subspaces
for observables $\M_1$ and $\M_2$, then it is easy to see that
$\H_1\otimes\H_2$ is an indistinguishable subspaces for the
observable $\M_1\otimes \M_2$, but there is a possibility of
existence of \emph{entangled} indistinguishable subspaces for the
observable $\M_1\otimes \M_2$.} It turns out that this question is
closely related to the superactivation of one-shot zero-error
quantum capacity.\smallskip

\begin{property}\label{und-s-e}
\emph{Let $\,\H^1_A,\H^2_A$ be finite-dimensional Hilbert spaces. The following
statements are equivalent:}
\begin{enumerate}[(i)]
    \item \emph{there exist  channels $\Phi_1:\S(\H^1_A)\rightarrow\S(\H^1_B)$ and $\Phi_2:\S(\H^2_A)\rightarrow\S(\H^2_B)$
    with $\,\dim\mathcal{G}(\Phi_1)=m_1$ and
    $\,\dim\mathcal{G}(\Phi_2)=m_2$ such that}
\begin{equation*}
    \bar{Q}_0(\Phi_1)=\bar{Q}_0(\Phi_2)=0\quad\textit{and}\quad
    \bar{Q}_0(\Phi_1\otimes\Phi_2)\geq\log n;
\end{equation*}
    \item \emph{there exist observables $\M_1=\{M^1_i\}_{i=1}^{m_1}$ and $\M_2=\{M^2_i\}_{i=1}^{m_2}$ in spaces $\H^1_A$
    and $\H^2_A$  having no
indistinguishable subspaces such that the observable $\M_1\otimes
\M_2$ has a $\,n$-dimensional indistinguishable subspace.}\medskip

\emph{If $\,\Phi_1=\Phi_2$ in $\,\mathrm{(i)}$ then $\M_1=\M_2$ in
$\,\mathrm{(ii)}$ and vice versa.}
\end{enumerate}
\end{property}

\begin{proof} An
observable $\M=\{M_i\}_{i=1}^{m}$ has a $\,n$-dimensional
indistinguishable subspace if and only if the one-shot zero-error
quantum capacity of the channel complementary to channel
(\ref{m-ch}) is not less than $\log n$, this observable $\M$ has no
indistinguishable subspaces if and only if the above capacity is
zero. This follows from Lemma \ref{trans-l+} and Proposition
\ref{und-s-c}, since the output set of the channel dual to channel
(\ref{m-ch}) coincides with the subspace of $\B(\H_A)$ generated
 by the family $\{M_i\}_{i=1}^{m}$.\smallskip

$\mathrm{(ii)\Rightarrow(i)}$. This directly follows from the above
remark.\smallskip

$\mathrm{(i)\Rightarrow(ii)}$. By the proof of Proposition 2 in
\cite{Sh&Sh} there exist base $\{A^1_i\}_{i=1}^{m_1}$ and
$\{A^2_i\}_{i=1}^{m_2}$ of the subspaces $\mathcal{G}(\Phi_1)$ and
$\mathcal{G}(\Phi_2)$ consisting of positive operators such that
$\sum_{i=1}^{m_1}A^1_i=I_{\H^1_A}$ and
$\sum_{i=1}^{m_2}A^2_i=I_{\H^2_A}$. If we consider these base as
observables $\M_1$ and $\M_2$ then validity of $\mathrm{(ii)}$ can
be shown by using the remark at the begin of this proof.
\end{proof}

\begin{remark}\label{inf-dim}
By the above proof the implication $\mathrm{(ii)\Rightarrow(i)}$ in
Proposition \ref{und-s-e} holds for infinite-dimensional Hilbert
spaces $\,\H^1_A,\H^2_A$ and $n\leq\infty$. The implication
$\mathrm{(i)\Rightarrow(ii)}$ can be generalized to this case if the
noncommutative graphs $\,\mathcal{G}(\Phi_1),\mathcal{G}(\Phi_2)$ are separable.
This can be done by using the arguments at the end of the
proof of Theorem \ref{sqc+} instead of Proposition 2 in
\cite{Sh&Sh}.
\end{remark}\smallskip

Proposition \ref{und-s-e} and Corollary \ref{sqc-c} imply the
following result.\smallskip

\begin{corollary}\label{und-s-e-c}
\emph{There exists a quantum observable
$\mathcal{M}=\{M_i\}_{i=1}^{5}$ in $\,4\textup{-}\mathrm{D}$
Hilbert space with no indistinguishable subspaces such that the
observable $\M\otimes \M$ has a continuous family of $\,2\textup{-}\mathrm{D}$ indistinguishable subspaces.}
\end{corollary}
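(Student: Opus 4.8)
Corollary \ref{und-s-e-c} is a direct specialization of the machinery already assembled. The plan is to apply Proposition \ref{und-s-e} (with $\Phi_1=\Phi_2=\Phi$, the channel of Corollary \ref{sqc-c}) together with the explicit objects constructed just after Corollary \ref{sqc-c}, and merely to check the three numerical parameters claimed in the statement: that the number of POVM elements is $5$, that the ambient space is $4$-dimensional, and that the distinguished indistinguishable subspace of the tensor product is $2$-dimensional and comes in a continuous family.

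First I would invoke Corollary \ref{sqc-c}, which guarantees a pseudo-diagonal channel $\Phi$ with $\dim\H_A=4$, $\dim\H_E=3$, $\mathcal{G}(\Phi)=\mathfrak{L}_0$, satisfying $\bar Q_0(\Phi)=0$ and $\bar Q_0(\Phi\otimes\Phi)>0$. Since $\dim\mathfrak{L}_0=5$, the ``base of positive operators'' furnished by the proof of Proposition 2 in \cite{Sh&Sh} has exactly five elements; indeed these are precisely the operators $A_1,\dots,A_5$ written out explicitly in Section 2, which are positive, span $\mathfrak{L}_0$, and satisfy $\sum_{i=1}^5 A_i=I_4$. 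Reading these as a POVM gives an observable $\M=\{A_i\}_{i=1}^5$ in the $4$-dimensional space $\H_A=\mathbb{C}^4$, and by the symmetric form of the $\mathrm{(i)}\Rightarrow\mathrm{(ii)}$ implication in Proposition \ref{und-s-e} (applied with $m_1=m_2=5$, $n=2$), $\M$ has no indistinguishable subspaces while $\M\otimes\M$ does.

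Next I would identify the indistinguishable subspace concretely. The equivalence established in the proof of Proposition \ref{und-s-e} ties indistinguishable subspaces of $\M\otimes\M$ to subspaces of $\H_A\otimes\H_A$ on which $\Phi\otimes\Phi$ is perfectly reversible, i.e. to the subspaces described by Lemma \ref{trans-l+} for $\mathfrak{L}=\mathfrak{L}_0\otimes\mathfrak{L}_0$. Corollary \ref{sqc-c} already records that for each fixed $t\in[0,2\pi)$ the channel $\Phi\otimes\Phi$ is perfectly reversible on $\H_t=\mathrm{lin}\{|\varphi_t\rangle,|\psi_t\rangle\}$, where $\varphi_t,\psi_t$ are the vectors of (\ref{vec+}); these two vectors are orthogonal unit vectors, so $\dim\H_t=2$. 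Transporting this through the equivalence, each $\H_t$ is a $2$-dimensional indistinguishable subspace of $\M\otimes\M$, and distinct values of $t$ give distinct subspaces, yielding the asserted continuous family.

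The argument is essentially a bookkeeping assembly, so there is no serious obstacle; the only point needing mild care is to confirm that the $A_i$ listed explicitly in Section 2 are the very POVM elements to which the abstract construction in Proposition \ref{und-s-e} refers, so that the observable genuinely has exactly five outcomes rather than some larger spanning set. This is immediate because $\{A_i\}_{i=1}^5$ is a basis of $\mathfrak{L}_0=\mathcal{G}(\Phi)$ consisting of positive operators summing to $I_4$, which is exactly the data the proposition extracts. With this identification in place, the two implications of Proposition \ref{und-s-e} supply both the absence of indistinguishable subspaces for $\M$ and their presence for $\M\otimes\M$, completing the proof.
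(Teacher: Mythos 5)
Your proposal is correct and follows exactly the paper's route: the paper derives this corollary by combining Proposition \ref{und-s-e} with Corollary \ref{sqc-c}, and then points to the explicit five-element resolution of the identity $\{A_i\}_{i=1}^{5}$ (a positive basis of the five-dimensional $\mathfrak{L}_0$ summing to $I_4$) as the concrete observable, with the subspaces spanned by the vectors (\ref{vec+}) giving the continuous family of $2$-dimensional indistinguishable subspaces. Your additional bookkeeping (checking $m_1=m_2=5$, $n=2$, and the orthogonality of $\varphi_t,\psi_t$) is a faithful elaboration of what the paper leaves implicit.
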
\smallskip

As a concrete example of such observable $\M$ on can take the
resolution of the identity  $\{A_i\}_{i=1}^{5}$ described after
Corollary \ref{sqc-c} in Section 2. In this case each 2-D subspace
of $\mathbb{C}^4\otimes\mathbb{C}^4$ spanned by the vectors
(\ref{vec+}) is indistinguishable for $\M\otimes\M$.\smallskip

Proposition \ref{und-s-e} (with Remark \ref{inf-dim}) and Theorem
\ref{sqc+} imply the following observation.\smallskip

\begin{corollary}\label{und-s-e-c+}
\emph{Let $\,n\in\mathbb{N}\,$ or $\;n=+\infty$.  There exists a
quantum observable $\mathcal{M}=\{M_i\}_{i=1}^{n^2-n+4}$ in
$\,2n\textup{-}$dimensional Hilbert space with no indistinguishable
subspaces such that the observable $\M\otimes \M$ has a continuous family of
$\,n\textup{-}$dimensional indistinguishable subspaces.}\footnote{If
$\,n=+\infty$ then $\,n^2-n+4=+\infty$ and the
$\,n\textup{-}$dimensional Hilbert space (subspace) means a
separable Hilbert space (subspace).}
\end{corollary}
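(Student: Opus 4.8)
The plan is to obtain this corollary with essentially no new computation, by feeding the channel produced in Theorem \ref{sqc+} into the channel--observable dictionary of Proposition \ref{und-s-e} (extended to infinite dimension by Remark \ref{inf-dim}). So the whole argument is an assembly of results already in hand.

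First I would invoke Theorem \ref{sqc+} to produce, for the given $\,n\leq+\infty$, a pseudo-diagonal channel $\Phi:\S(\H_A)\to\S(\H_B)$ with $\dim\H_A=2n$ whose noncommutative graph is $\mathcal{G}(\Phi)=\mathfrak{L}_n$, of dimension $n^2-n+4$, satisfying $\bar{Q}_0(\Phi)=0$ while $\Phi\otimes\Phi$ is perfectly reversible on each of the $n$-dimensional subspaces $\H_t=\mathrm{lin}\{|\varphi_k^t\rangle\}_{k=1}^n$, $t\in[0,2\pi)$, built from (\ref{vec++}).

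Next I would translate this into the language of observables through Proposition \ref{und-s-e} taken with $\Phi_1=\Phi_2=\Phi$ (so that $\M_1=\M_2=\M$). By the proof of that proposition one selects a basis $\{M_i\}_{i=1}^{n^2-n+4}$ of $\mathcal{G}(\Phi)$ consisting of positive operators with $\sum_i M_i=I_{\H_A}$; this is precisely an observable $\M$ in the $2n$-dimensional space $\H_A$ with $n^2-n+4$ outcomes. Since $\{M_i\}$ generates $\mathcal{G}(\Phi)$, the indistinguishability conditions of Proposition \ref{und-s-c} for $\M$ (respectively $\M\otimes\M$) coincide with the perfect-reversibility conditions of Lemma \ref{trans-l+} for $\Phi$ (respectively $\Phi\otimes\Phi$). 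Hence $\bar{Q}_0(\Phi)=0$ forces $\M$ to have no indistinguishable subspace, and perfect reversibility of $\Phi\otimes\Phi$ on each $\H_t$ says exactly that $\H_t$ is an $n$-dimensional indistinguishable subspace of $\M\otimes\M$. As $t$ runs over $[0,2\pi)$ the vectors $|\varphi_k^t\rangle$, and hence the subspaces $\H_t$, vary continuously, which yields the asserted continuous family.

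The only point that needs care is the case $n=+\infty$, where both the Hilbert space and the number of outcomes become infinite and Proposition \ref{und-s-e} does not apply verbatim. Here I would appeal to Remark \ref{inf-dim}, which grants the implication $\mathrm{(i)}\Rightarrow\mathrm{(ii)}$ in infinite dimension provided the noncommutative graph is separable. I would therefore observe that the channel built in the $n=+\infty$ part of Theorem \ref{sqc+} has the separable graph $\overline{\mathcal{L}}_{*}$, and that the countable resolution of the identity $\{M_i\}_{i=1}^{+\infty}\subset\overline{\mathcal{L}}_{*}\cap\B_{+}(\H_A)$ constructed there already serves as the required observable $\M$; the remainder of the argument is unchanged. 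I expect this separability and infinite-outcome bookkeeping to be the main (and essentially the only) obstacle, since every other step is a direct restatement of the preceding results.
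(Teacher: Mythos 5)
Your proposal is correct and follows exactly the route the paper intends: the paper gives no separate proof, stating only that the corollary follows from Proposition \ref{und-s-e} (with Remark \ref{inf-dim} for $n=+\infty$) combined with Theorem \ref{sqc+}, which is precisely the assembly you carry out. Your explicit remarks on the outcome count $n^2-n+4=\dim\mathfrak{L}_n$, the continuous parameter $t$, and the separability of $\overline{\mathcal{L}}_{*}$ in the infinite-dimensional case are faithful elaborations of the paper's intended argument rather than deviations from it.
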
\smallskip

\begin{remark}
The above effect of appearance of (entangled) indistinguishable
subspace for tensor product of two observables $\M_1$ and $\M_2$
having no indistinguishable subspaces \emph{does not hold for sharp
observables} $\M_1$ and $\M_2$ (since the tensor product of two
observables consisting of mutually orthogonal 1-rank projectors is
an observable consisting of mutually orthogonal 1-rank projectors as
well).
\end{remark}

\section*{Appendix: The Kraus representation of a channel with given noncommutative graph}

The following proposition is a modification of Corollary 1 in
\cite{Sh&Sh}.\smallskip

\begin{property}\label{cmp-c}
\emph{Let $\,\mathfrak{L}$ be a subspace of $\,\mathfrak{M}_n$,
$n\geq2$, satisfying condition (\ref{L-cond}) and $\{A_i\}_{i=1}^d$
a basis of $\,\mathfrak{L}$ such that $\,A_i\geq0$ for all $i$ and
$\,\sum_{i=1}^d A_i=I_n$.\footnote{The existence of a basis
$\{A_i\}_{i=1}^d$ with the stated properties for any subspace
$\mathfrak{L}$ satisfying condition (\ref{L-cond}) is shown in the
proof of Proposition 2 in \cite{Sh&Sh}.} Let $m$ be a natural number
such that $\,d=\dim\mathfrak{L}\leq m^2$ and
$\,\{|\psi_i\rangle\}_{i=1}^d$ a collection of unit vectors in
$\,\mathbb{C}^m$ such that
$\,\{|\psi_i\rangle\langle\psi_i|\}_{i=1}^d$ is a linearly
independent subset of $\,\mathfrak{M}_m$.}
\smallskip

\emph{For each $k=\overline{1,m}$ let $\,V_k$ be the linear operator
from $\H_A\doteq\mathbb{C}^n$ into
$\H_B\doteq\bigoplus_{i=1}^d\mathbb{C}^{r_i}$, where $r_i=\rank
A_i$, defined as follows
$$
V_k=\sum_{i=1}^d\langle k|\psi_i\rangle W_iA_i^{1/2},
$$
where $\{|k\rangle\}$ is the canonical basis in $\mathbb{C}^m$ and
$W_i$ is a partial isometry from $\H_A$ into $\H_B$ with the initial
subspace $\mathrm{Ran}A_i$ and the final subspace
$\mathbb{C}^{r_i}$. Then the channel
\begin{equation}\label{ch-phi}
\mathfrak{M}_n\ni\rho\mapsto\Phi(\rho)=\sum_{k=1}^mV_k\rho
V_k^*\in\mathfrak{M}_{r_1+...+r_d},
\end{equation}
is pseudo-diagonal and its noncommutative graph $\mathcal{G}(\Phi)$
coincides with $\,\mathfrak{L}$.}
\end{property}\medskip

\begin{proof}
In the proof of Corollary 1 in \cite{Sh&Sh} it is shown that the
channel
$$
\mathfrak{M}_n\ni\rho\mapsto\Psi(\rho) = \sum_{i=1}^d[\Tr A_i
\rho]|\psi_i\rangle\langle\psi_i|\in\mathfrak{M}_m
$$
has the Stinespring representation
$$
\Psi(\rho)=\Tr_{\mathbb{C}^n\otimes\mathbb{C}^d}V\rho V^*
$$
where
$$
V:|\varphi\rangle\mapsto\sum_{i=1}^dA_i^{1/2}|\varphi\rangle\otimes|i\rangle\otimes|\psi_i\rangle
$$
is an isometry from $\mathbb{C}^n$ into
$\mathbb{C}^n\otimes\mathbb{C}^d\otimes\mathbb{C}^m$ (here
$\{|i\rangle\}$ is the canonical basis in $\mathbb{C}^d$).

Since the channel $\Psi$ is entanglement-breaking and
$\Psi^*(\mathfrak{M}_m)=\mathfrak{L}$, its complementary channel
$$
\widehat{\Psi}(\rho)=\Tr_{\mathbb{C}^m}V\rho V^*
$$
is pseudo-diagonal and $\mathcal{G}(\widehat{\Psi})=\mathfrak{L}$.
Its Kraus representation is
$\widehat{\Psi}(\rho)=\sum_{k=1}^m\tilde{V}_k\rho \tilde{V}_k^*$,
where the operators $\tilde{V}_k$ are defined by the relation
$$
\langle\phi|\tilde{V}_k\varphi\rangle=\langle\phi\otimes
k|V\varphi\rangle,\quad\varphi\in\mathbb{C}^n,\phi\in\mathbb{C}^n\otimes\mathbb{C}^d,
$$
so that
$$
\tilde{V}_k|\varphi\rangle=\sum_{i=1}^d \langle k|\psi_i\rangle
A_i^{1/2}|\varphi\rangle\otimes|i\rangle.
$$
By identifying  $\mathbb{C}^n\otimes\mathbb{C}^d$ with
$\bigoplus_{i=1}^d\mathbb{C}^{n}$, it is easy to show that the
channel $\Phi$ defined by (\ref{ch-phi}) is isometrically equivalent
to the channel $\widehat{\Psi}$ (see \cite[the Appendix]{H-c-c}) and
hence $\mathcal{G}(\Phi)=\mathcal{G}(\widehat{\Psi})=\mathfrak{L}$.
\end{proof}

We are grateful to A.S.Holevo and to the participants of his seminar
"Quantum probability, statistic, information" (the Steklov
Mathematical Institute) for useful discussion.
\bigskip

The work of the first author is partially supported by RFBR grant
12-01-00319 and by the RAS research program. The work of the second
author is partially supported by the Danish
Research Council through the Centre for Symmetry and Deformation at the
University of Copenhagen.

\end{document}